\renewcommand\path[1]{{\sffamily\footnotesize\detokenize{#1}}}
\DeclarePairedDelimiter\paren{\lparen}{\rparen}
\DeclarePairedDelimiter\abs{\lvert}{\rvert}
\DeclarePairedDelimiter\set{\{}{\}}
\DeclarePairedDelimiter\mset{\{\!\!\{}{\}\!\!\}}
\DeclarePairedDelimiterX\setc[2]{\{}{\}}{\,#1 \;\colon\; #2\,}
\DeclarePairedDelimiterX\parenc[2]{\lparen}{\rparen}{\,#1 \;\delimsize\vert\; #2\,}
\newcommand{\sgn}[1]{\operatorname{sgn}{#1}}
\newcommand{\extalg}[1]{\Lambda(#1)}
\newcommand{\CC}{\mathbf{C}}
\newcommand{\QQ}{\mathbf{Q}}
\newcommand{\NN}{\mathbf{N}}
\newcommand{\ZZ}{\mathbf{Z}}
\newcommand{\RR}{\mathbf{R}}
\newcommand{\bernoulli}[1]{\mathcal{B}^{k\times k}}
\newcommand{\var}[1]{\operatorname{Var}\left(#1\right)}
\newcommand{\poly}{\operatorname{poly}}
\newcommand{\Sub}{\operatorname{Sub}}
\let\Xi\varXi
\newcommand\algebra{{\mathcal{A}}}
\newcommand{\Hom}[2]{\operatorname{Hom}(#1\to#2)}
\newcommand{\InjHom}[2]{\operatorname{InjHom}(#1\to#2)}
\newcommand{\tw}[1]{\operatorname{tw}(#1)}
\newcommand{\pw}[1]{\operatorname{pw}(#1)}
\newcommand{\cone}{\mathfrak{c}}
\newcommand{\comp}{\mathfrak{a}}
\newcommand{\bag}{\mathfrak{b}}
\newcommand{\sep}{\mathfrak{s}}
\newcommand{\walkpol}{f}
\newcommand{\walks}{\mathscr W}
\newcommand{\paths}{\mathscr P}
\newcommand{\cano}[1]{\mathbf e_{#1}}
\newtheorem{thm}{Theorem}
\newtheorem{lem}[thm]{Lemma}
\newtheorem{prop}[thm]{Proposition}
\theoremstyle{definition}
\newtheorem*{rem}{Remark}
\newtheorem*{rep@theorem}{\rep@title}
\newcommand{\newreptheorem}[2]{%
	\newenvironment{rep#1}[1]{%
		\def\rep@title{#2 \ref{##1}}%
		\begin{rep@theorem}[restated]}%
		{\end{rep@theorem}}}
\newenvironment{algor}[3]{%
\bigskip
\noindent{\sffamily\bfseries Algorithm #1} ({\itshape#2\/}) {\itshape #3}
\begin{description}[noitemsep,labelindent=0pt,labelwidth=1.7em,labelsep=0pt,leftmargin=!]%
\vspace{-.8ex}}{%
\end{description}\medskip}
\author[a]{Cornelius Brand}
\author[a]{Holger Dell}
\author[b]{Thore Husfeldt}
\affil[a]{%
  Saarland University and Cluster of Excellence (MMCI), Saarbr\"{u}cken, Germany%
  \\{\{cbrand,hdell\}@mmci.uni-saarland.de}
}
\affil[b]{%
  Lund University and
  Basic Algorithms Research Copenhagen, ITU Copenhagen%
  \\{thore@itu.dk}
}
\title{Extensor-Coding}
\begin{document}

\maketitle

\begin{abstract}
  We devise an algorithm that approximately computes the number of paths of length~$k$ in a given directed graph with~$n$ vertices up to a multiplicative error of~$1 \pm \varepsilon$.
  Our algorithm runs in time $\varepsilon^{-2} 4^k(n+m) \poly(k)$.
  The algorithm is based on associating with each vertex an element in the exterior (or, Grassmann) algebra, called an extensor, and then performing computations in this algebra.
  This connection to exterior algebra generalizes a number of previous approaches for the longest path problem and is of independent conceptual interest.
  Using this approach, we also obtain a deterministic $2^{k}\cdot\poly(n)$ time algorithm to find a~$k$-path in a given directed graph that is promised to have few of them.
  Our results and techniques generalize to the subgraph isomorphism problem when the subgraphs we are looking for have bounded pathwidth.
  Finally, we also obtain a randomized algorithm to detect $k$-multilinear terms in a multivariate polynomial given as a general algebraic circuit.
  To the best of our knowledge, this was previously only known for algebraic circuits not involving negative constants.
\end{abstract}

\section{Introduction}

A path is just a walk that does not vanish in the exterior algebra.
This observation leads us to a new approach for algebraic graph algorithms for the $k$-path problem, one of the benchmarks of progress in parameterized algorithms.
Our approach generalizes and unifies previous techniques in a clean fashion, including the color-coding method of Alon, Yuster, and Zwick~\cite{DBLP:journals/jacm/AlonYZ95} and the vector-coding idea of Koutis~\cite{DBLP:conf/icalp/Koutis08}.
Color-coding yields a randomized algorithm for approximately counting $k$-paths~\cite{alon2008biomolecular} that runs in time $(2e)^k\poly(n)$.
We improve the running time to $4^k\poly(n)$, addressing an open problem in the survey article of Koutis and Williams~\cite{Koutis:2015:AFF:2859829.2742544}.
Our approach applies not only to paths, but also to other subgraphs of bounded pathwidth.

In hindsight, it is obvious that the exterior algebra enjoys exactly the properties needed for the $k$-path problem.
Thus, it seems strange that this construction has eluded algorithms designers for so long.
But as the eminent combinatorialist Gian-Carlo Rota observed in 1997, ``[t]he neglect of the exterior algebra is the mathematical tragedy of our century,''~\cite{Rota} so we are in good company.

The exterior algebra is also called alternating algebra, extended algebra, or Grassmann algebra after its 19th century discoverer.
It is treated extensively in any modern textbook on algebra, and has applications in many fields, from differential geometry and representation theory to theoretical physics.
Conceptually, our contribution is to identify yet another entry in the growing list of applications of the exterior algebra, inviting the subgraph isomorphism problem to proudly take its place between simplicial complexes and supernumbers.

\paragraph{Longest Path.}
The Longest Path problem is the optimization problem to find a longest (simple) path in a given graph.
Clearly, this problem generalizes the NP-hard Hamiltonian path problem~\cite{Garey:1979:CIG:578533}.
We consider the decision version, the $k$-path problem, in which we wish to find a path of length~$k$ in a given graph~$G$.
It was proved fixed-parameter tractable \emph{avant la lettre}~\cite{MONIEN1985239}, and a sequence of both iterative improvements and conceptual breakthroughs~\cite{DBLP:journals/jal/Bodlaender93,DBLP:journals/jacm/AlonYZ95,DBLP:journals/siamcomp/Bjorklund14,DBLP:conf/wg/KneisMRR06,doi:10.1137/080716475,DBLP:journals/jacm/FominLPS16,DBLP:journals/ipl/Williams09} have lead to the current state-of-the-art for undirected graphs: a randomized algorithm by Björklund \emph{et al.}~\cite{DBLP:journals/jcss/BjorklundHKK17} in time~$1.66^k\cdot \poly(n)$.
For directed graphs, the fastest known randomized algorithm is by Koutis and Williams~\cite{koutis2009limits} in time $2^k\cdot\poly(n)$, whereas the fastest deterministic algorithm is due to Zehavi~\cite{Zehavi14} in time~$2.5961^k\cdot\poly(n)$.

\paragraph{Subgraph isomorphism.}
The subgraph isomorphism problem generalizes the $k$-path problem and is one of the most fundamental graph problems~\cite{cook1971complexity,ullmann1976algorithm}:
Given two graphs~$H$ and~$G$, decide whether~$G$ contains a subgraph isomorphic to~$H$.
This problem and its variants have a vast number of applications, covering areas such as statistical physics, probabilistic inference, and network analysis~\cite{Milo824}.
For example, such problems arise in the context of discovering \emph{network motifs}, small patterns that occur more often in a network than would be expected if it was random.
Thus, one is implicitly interested in the counting version of the subgraph isomorphism problem:
to compute the \emph{number} of subgraphs of~$G$ that are isomorphic to~$H$.
Through network motifs, the problem of counting subgraphs has found applications in the study of gene transcription networks, neural networks, and social networks~\cite{Milo824}. Consequently, there is a large body of work dedicated to algorithmic discovery of network motifs~\cite{grochow2007network,alon2008biomolecular,omidi2009moda,kashani2009kavosh,schreiber2005frequency,chen2006nemofinder,kashtan2004efficient,wernicke2006efficient,DBLP:conf/alcob/SchillerJHS15}.
For example, Kibriya and Ramon~\cite{DBLP:journals/datamine/KibriyaR13,DBLP:books/crc/p/RamonCCW14} use the ideas of Koutis and Williams~\cite{koutis2009limits} to enumerate all trees that occur frequently.

\paragraph{Counting subgraphs exactly.}
The complexity of exact counting is often easier to understand than the corresponding decision or approximate counting problems.
For instance, the counting version of the famous dichotomy conjecture by Feder and Vardi~\cite{DBLP:conf/stoc/FederV93,DBLP:journals/siamcomp/FederV98} was resolved by Bulatov~\cite{DBLP:conf/icalp/Bulatov08,DBLP:journals/jacm/Bulatov13} almost a decade before proofs were announced for the decision version by Bulatov~\cite{DBLP:journals/corr/Bulatov17a} and Zhuk~\cite{DBLP:journals/corr/Zhuk17}.
A similar phenomenon can be observed for the parameterized complexity of the subgraph isomorphism problem, the counting version of which is much better understood than the decision or approximate counting versions:
The problem of counting subgraphs isomorphic to~$H$ is fixed-parameter tractable if~$H$ has a vertex cover of bounded size~\cite{williams2013finding} (also cf.\ \cite{DBLP:journals/siamdm/KowalukLL13,curticapean2014complexity,DBLP:conf/stoc/CurticapeanDM17}), and it is \#W[1]-hard whenever~$H$ is from a class of graphs with unbounded vertex cover number~\cite{curticapean2014complexity,DBLP:conf/stoc/CurticapeanDM17}, and thus it is not believed to be fixed-parameter tractable in the latter case.
In particular, this is the case for counting all $k$-paths in a graph.
The fastest known general-purpose algorithm~\cite{DBLP:conf/stoc/CurticapeanDM17} for counting $H$-subgraphs in an $n$-vertex graph~$G$ runs in time~$k^{O(k)} n^{t^\ast+1}$ where $k$ is the number of vertices of~$H$ and $t^\ast$ is the largest treewidth among all homomorphic images of~$H$.

\paragraph{Our results.}
For finite directed or undirected graphs~$H$ and~$G$, let $\Sub(H,G)\in\NN$ be the number of (not necessarily induced) subgraphs of~$G$ that are isomorphic to~$H$.
The main algorithmic result in this paper is a randomized algorithm that computes an approximation to this number.
\def\statethmapproxcount{
  There is a randomized algorithm that is given two graphs~$H$ and~$G$, and a number~${\varepsilon>0}$ to compute an integer~$\tilde N$ such that, with probability~$99\%$, 
\begin{equation}
  (1-\varepsilon) \cdot\Sub(H,G) \le
  \tilde N
  \le (1+\varepsilon) \cdot\Sub(H,G)
  \,.
\end{equation}
This algorithm runs in time $\varepsilon^{-2}\cdot 4^k n^{\pw H+1}\cdot \poly(k)$,
where~$H$ has~$k$ vertices and pathwidth~$\pw H$, and~$G$ has~$n$ vertices.
}
\begin{thm}[Approximate subgraph counting]
  \label{thm approx count sub}
  \statethmapproxcount
\end{thm}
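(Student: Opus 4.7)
The plan is to combine an extensor-based algebraic identity for $\Sub(H,G)$ with a path-decomposition dynamic program. To each $v \in V(G)$ assign an independent random extensor $\xi(v) \in \Lambda(F^k)$ in the degree-one part, with coordinates drawn from a symmetric unit-variance distribution over $F=\RR$. Consider the polynomial
\[
  Z \;=\; \sum_{\phi : V(H) \to V(G)} \mathbf{1}\bigl[\phi \text{ is a homomorphism}\bigr]\cdot \bigwedge_{v \in V(H)} \xi(\phi(v))\,,
\]
with the wedge taken in a fixed order on $V(H)$. Because $x\wedge x = 0$, non-injective homomorphisms contribute zero and only embeddings of $H$ survive; each copy of $H$ in $G$ then contributes a (possibly signed) determinantal term. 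From $Z$ one then extracts a scalar unbiased estimator $\hat N$ for $\Sub(H,G)$ whose expectation is proportional to the desired count.

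The central analytic step is to show $\var{\hat N} = O\bigl(\Sub(H,G)^2\bigr)$, so that $O(\varepsilon^{-2})$ independent trials combined with Chebyshev concentration and a standard median-of-means boost deliver a $(1\pm\varepsilon)$-approximation with probability at least $99\%$. This variance analysis is the most delicate part: expanding $|Z|^2$ produces cross-terms indexed by pairs of embeddings whose images overlap on an arbitrary vertex subset, and these must be controlled using the orthogonality of distinct monomial wedges under the natural inner product on $\Lambda(F^k)$ together with the mean-zero property of the coordinates of $\xi$. A secondary subtlety is that the naive wedge estimator can vanish in expectation whenever $\operatorname{Aut}(H)$ contains an odd permutation; avoiding this likely requires a more careful construction, e.g., promoting $\xi$ to a pair of independent extensors and working with an inner product in $\Lambda(F^k)^{\otimes 2}$.

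For the running-time bound, $Z$ is computed by dynamic programming along a path decomposition of $H$ of width $\pw H$. Each DP cell is indexed by a mapping of the current bag (of at most $\pw H+1$ vertices) into $V(G)$ and stores a partial element of $\Lambda(F^k)$, giving at most $n^{\pw H + 1}$ cells. Transitions that introduce or forget a vertex multiply by $\xi(\phi(v))$; such a multiplication in $\Lambda(F^k)$, an $F$-algebra of dimension $2^k$, costs $\poly(k)\cdot 4^k$ by direct expansion in the monomial basis. Multiplied by the $\varepsilon^{-2}$ independent trials, this yields the claimed $\varepsilon^{-2}\cdot 4^k\cdot n^{\pw H + 1}\cdot \poly(k)$ running time. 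The hard part, as emphasized, is the variance bound; once it is in place, unbiasedness, DP correctness along the decomposition, and Chebyshev amplification are all routine.
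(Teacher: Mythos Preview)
Your estimator $Z$ has expectation zero, not $\Sub(H,G)$: each surviving term is $\det(B_\phi)\cdot\cano{[k]}$ for a $k\times k$ matrix $B_\phi$ with i.i.d.\ mean-zero entries, and any such determinant satisfies $\E\det B_\phi=0$. This is not merely the odd-automorphism cancellation you flag as ``secondary''; it already kills the single-embedding case. Your proposed fix---a pair of \emph{independent} extensors with an inner product in $\Lambda(F^k)^{\otimes 2}$---does not help, since $\E[\det B\cdot\det B']=\E[\det B]\E[\det B']=0$ for independent $B,B'$. Squaring the scalar coefficient of $Z$ instead gives $\E[(\sum_\phi\det B_\phi)^2]=\sum_{\phi,\psi}\E[\det B_\phi\det B_\psi]$, which picks up nonzero cross-terms from every pair of embeddings sharing a vertex and is therefore not proportional to the count.

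The paper's fix is to make each term nonnegative \emph{before} summing: replace $\beta(v)\in F^k$ by the decomposable blade $\overline{\beta(v)}=\binom{\beta(v)}{0}\wedge\binom{0}{\beta(v)}\in\Lambda^2(F^{2k})$. Then each injective homomorphism contributes $\pm(\det B_\phi)^2\cdot\cano{[2k]}$ with a global sign, so the coefficient of $\cano{[2k]}$ has expectation $\pm k!\cdot\abs{\InjHom HG}$. The variance bound follows not from orthogonality arguments but from the classical fourth-moment estimate $\E[\det B^4]\le(k!)^2k^3$ for Bernoulli matrices, combined with Cauchy--Schwarz on the covariances. This lift is also the true source of the $4^k$: you now compute in $\Lambda(F^{2k})$, of dimension $4^k$, and the skew circuit from the path decomposition multiplies by degree-$2$ elements at a cost of $4^k\poly(k)$ each. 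In your unlifted $\Lambda(F^k)$ the skew multiplication by a degree-$1$ element costs only $2^k\poly(k)$, so your $4^k$ budget for multiplication there is an overcount that happens to match the right answer for the wrong reason.
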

Our algorithm works for directed and undirected graphs with the same running time (in fact, undirected graphs are treated as being bi-directed).
An algorithm such as the one in Theorem~\ref{thm approx count sub} is called a fixed-parameter tractable randomized approximation scheme (FPT-RAS) for $\Sub$.
The notion of an FPT-RAS was defined by Arvind and Raman~\cite{DBLP:conf/isaac/ArvindR02}, who use a sampling method based on Karp and Luby~\cite{DBLP:conf/focs/KarpL83} to obtain a version of Theorem~\ref{thm approx count sub} with an algorithm that runs in time $\exp(O(k\log k))\cdot n^{\tw H+O(1)}$.
For the special cases of paths and cycles, Alon and Gutner~\cite{DBLP:conf/iwpec/AlonG09,DBLP:journals/talg/AlonG10} are able to combine the color-coding technique by Alon, Yuster, and Zwick~\cite{DBLP:journals/jacm/AlonYZ95} with balanced families of hash functions to obtain an algorithm for approximately counting paths or cycles in time~$\exp(O(k\log\log k))\cdot n\log n$.
Alon \emph{et al.}~\cite{alon2008biomolecular}, in turn, use the color-coding technique to obtain the first singly-exponential time version of Theorem~\ref{thm approx count sub}, in particular with an algorithm running in time~$\varepsilon^{-2} \cdot (2e)^k\cdot n^{\tw H+O(1)}$.
To the best of our knowledge, Theorem~\ref{thm approx count sub} is now the fastest known algorithm to approximately count subgraphs of small pathwidth.

When we are promised that $G$ contains not too many subgraphs isomorphic to~$H$, we obtain the following deterministic algorithm.

\begin{thm}[Detecting subgraphs when there are few]
  \label{thm decide sub}
  There is a deterministic algorithm that is given two graphs~$H$ and~$G$ to decide whether~$G$ has a subgraph isomorphic to~$H$, with the promise that~$G$ has at most $C\in\NN$ such subgraphs.
  This algorithm runs in time $O(C^2 2^k n^{\pw H+O(1)})$,
  where the number of vertices of~$H$ is~$k$ and the number of vertices of~$G$ is~$n$.
\end{thm}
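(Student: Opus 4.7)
The plan is to derandomize the extensor-coding framework behind Theorem~\ref{thm approx count sub} by exploiting the promise $\Sub(H,G)\le C$. The randomized algorithm of Theorem~\ref{thm approx count sub} evaluates an extensor-valued expression at random inputs to estimate the subgraph count; under the promise, this expression decomposes as a sum of at most $C$ ``elementary'' top-degree extensors, and a structured deterministic family of $O(C^2)$ inputs suffices to certify whether the sum is nonzero.

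First, following Theorem~\ref{thm approx count sub}, I would assign to each vertex $v\in V(G)$ a formal degree-$1$ extensor $x_v=\sum_{i=1}^k x_{v,i}\cano{i}$, and compute by dynamic programming over a path decomposition of $H$ the extensor-valued polynomial
\begin{equation*}
P(X) \;=\; \sum_{\substack{H'\subseteq G\\ H'\cong H}} \bigwedge_{v\in V(H')} x_v \, ,
\end{equation*}
where the wedge is taken in a canonical vertex order determined by the decomposition. Since distinct $H$-copies have distinct vertex sets, their contributions to $P$ are supported on disjoint squarefree monomial supports in the $x_{v,i}$; in particular $P\not\equiv 0$ iff $\Sub(H,G)\ge 1$, and the promise bounds the number of nonzero top-degree extensors appearing in $P$ by $C$.

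Second, I would apply a deterministic sparse polynomial identity test. A Kronecker substitution $x_{v,i}\mapsto y^{e(v,i)}$ over a prime field of size $\poly(Cn)$ can be designed so that distinct vertex sets of copies map to distinct $y$-exponents; standard Klivans--Spielman-style arguments then detect nonvanishing with $O(C^2)$ evaluations at integer points. Each evaluation runs the dynamic program: at each of the $n^{\pw H+1}$ assignments of a bag of $H$ to vertices of $G$, we maintain an element of $\Lambda(F^k)$. Since we wedge $1$-forms into forms of increasing degree and $\sum_{d=0}^k\binom{k}{d}=2^k$, one complete evaluation costs $2^k n^{\pw H+O(1)}$ time, and multiplying by the $O(C^2)$ evaluations yields the claimed bound.

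The main obstacle is to ensure that the promise ``at most $C$ copies'' really translates into $O(C)$-sparsity, rather than the naive $C\cdot k!$ bound arising from expanding each top extensor into its $k!$ signed monomials. The fix is to organise the identity test at the level of top extensors rather than at the level of individual monomials of $F[X]$: each copy contributes a single basis element of $\Lambda^k(F^k)$ whose scalar coefficient is a specific determinant of the $x_{v,i}$, and the Kronecker exponents are chosen so that distinct vertex sets produce distinct $y$-frequencies that cannot accidentally cancel. The $k!$ factor from expanding one determinant is then absorbed into the cost of a single evaluation, where it is already subsumed by the $2^k$ factor.
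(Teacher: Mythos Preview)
There is a genuine gap, in two places. First, the claim that ``distinct $H$-copies have distinct vertex sets'' is false: if $H$ is a $4$-cycle and $G=K_4$, then $G$ contains three copies of $H$, all supported on the same four vertices. Second, and more seriously, the dynamic program along a path decomposition of $H$ does not compute the polynomial $P(X)$ you write down. What it computes is the homomorphism polynomial evaluated in the exterior algebra, that is, $\sum_{h\in\InjHom{H}{G}} \bigwedge_{u\in V(H)} x_{h(u)}$, with the wedge taken in the fixed order on $V(H)$ induced by the decomposition. For a fixed copy $H'\subseteq G$ this is a sum over the $\abs{\operatorname{Aut}(H)}$ injective homomorphisms onto $H'$, and their contributions differ only by the sign of the corresponding permutation of $V(H)$; whenever $\operatorname{Aut}(H)$ contains an odd permutation these cancel identically, and copies sharing a vertex set can likewise interfere. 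Hence the expression you actually evaluate may be identically zero even when $\Sub(H,G)\ge 1$, and no downstream identity test can recover from that. Your final paragraph addresses a $k!$ sparsity blow-up, not this cancellation.

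The paper avoids both issues by placing the formal variables elsewhere. At the vertices it uses \emph{fixed} Vandermonde vectors $\phi(v_i)=(1,i,\dots,i^{k-1})^T$, so that each $k$-fold wedge is a concrete nonzero scalar rather than a determinant polynomial still awaiting PIT. The formal indeterminates sit on the \emph{edges}: one variable $y_e$ per $e\in E(G)$. In the detailed $k$-path case (Algorithm~F), every directed path is uniquely determined by its edge set and has trivial automorphism group, so distinct paths yield linearly independent monomials in the $y_e$ with nonzero Vandermonde coefficients; the resulting polynomial in $Y$ has at most $C$ monomials. Applying the deterministic sparse polynomial identity test of Bl\"aser \emph{et al.}\ then gives the $C^2$ factor directly, with no $k!$ to absorb. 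For general $H$ the paper follows the same edge-variable template on top of the homomorphism-polynomial circuit.
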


Without the promise on the number of subgraphs,
Fomin \emph{et al.}~\cite{DBLP:journals/jcss/FominLRSR12} detect subgraphs in randomized time $\tilde O(2^kn^{\tw H+1})$ and
Fomin \emph{et al.}~\cite{DBLP:journals/jacm/FominLPS16} do so in deterministic time $2.619^k n^{O(\tw H)}$.
For $C\le O(1)$, or $C\le \poly(n,k)$ when ignoring polynomial factors, we thus match the running time of the fastest randomized algorithm, but do so deterministically, and for $C\le O\paren{1.144^k}$, our algorithm is the fastest deterministic algorithm for this problem.
For the interesting special case of paths, the running time of the fastest deterministic algorithm for undirected or directed $k$-paths (without promise) is $2.5961^k \cdot \poly(n)$ by Zehavi~\cite{Zehavi14}, which we improve upon if $C\le O(1.139^k)$.

Our method also applies to the problem of detecting whether a multivariate polynomial contains a multilinear term.
\begin{thm}[Detecting multilinear terms]
  \label{thm detect multilinear}
  Given an algebraic circuit~$C$ over $\ZZ[\zeta_1,\dots,\zeta_n]$ and a number~$k$, we can detect whether the polynomial $C(\zeta_1,\dots,\zeta_n)$ has a degree-$k$ multilinear term in randomized time~$4.32^k\cdot\abs{C}\cdot \poly(n)$.
\end{thm}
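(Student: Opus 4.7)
The plan is to mimic the extensor-coding paradigm of this paper at the level of the circuit's computation tree. Pick a field $F$ of size $\Omega(k/\delta)$, sample independent uniform random degree-$1$ extensors $v_1,\ldots,v_n\in F^k\subset\extalg{F^k}$ with entries $a_{ij}\in F$, and evaluate $C$ with the substitution $\zeta_i\mapsto v_i$ using arithmetic in $\extalg{F^k}$. The algorithm accepts iff the coefficient $\tilde N$ of the top basis element $\cano{1}\wedge\cdots\wedge\cano{k}$ in $C(v_1,\ldots,v_n)$ is nonzero.

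Correctness follows from the defining relation $v\wedge v=0$ of the exterior algebra: any monomial that uses some variable more than once vanishes under this substitution. Writing $c_S\in\ZZ$ for the coefficient of $\prod_{i\in S}\zeta_i$ in the (formal, never explicitly computed) monomial expansion of $C$, the top-degree component of the evaluation equals
\[
  \tilde N \;=\; \sum_{S\in\binom{[n]}{k}} c_S\cdot\det V_S,
\]
where $V_S$ is the $k\times k$ matrix with columns $\{v_i:i\in S\}$. Viewed as a polynomial in the $a_{ij}$, this has total degree $k$ and is identically zero iff every $c_S$ vanishes, i.e.\ iff $C$ has no degree-$k$ multilinear term. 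The Schwartz--Zippel lemma then bounds the one-sided error probability by $k/|F|\le\delta$.

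For the running time, I would walk gate-by-gate through $C$, storing each intermediate value in its $2^k$-coordinate standard-basis representation. Additions cost $O(2^k)$ field operations, and a wedge product of two basis elements $\cano{S}\wedge\cano{T}$ is either $0$ (if $S\cap T\neq\emptyset$) or $\pm\cano{S\cup T}$, so the naive bilinear rule costs $O(4^k)$ field operations per product gate. The full evaluation thus uses $O(4^k\cdot|C|\cdot\poly(n,k))$ bit operations; the headroom between $4^k$ and $4.32^k$ absorbs the cost of arithmetic over $F$ together with the constant number of independent repetitions needed to boost success probability.

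The conceptual obstacle, and the reason the theorem is stated at all, is the treatment of negative constants. Prior multilinear-detection algorithms of Koutis and Williams embed monomials into a group algebra over $\mathrm{GF}(2)$ and rely on pairwise cancellation of non-multilinear monomials; in the presence of subtractions in $C$, this cancellation pattern can be subverted, which is why those results are stated only for monotone circuits. In the extensor approach the vanishing of non-multilinear monomials is an \emph{identity} inside $\extalg{F^k}$, valid over any commutative ring independently of sign; hence negative coefficients pass through harmlessly, and Schwartz--Zippel on the surviving top-degree polynomial detects any remaining multilinear term.
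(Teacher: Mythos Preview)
There is a genuine gap: the exterior algebra $\Lambda(F^k)$ is not commutative, so evaluating a general algebraic circuit over it is not the same as substituting into the polynomial the circuit computes. Your formula $\tilde N = \sum_S c_S \det V_S$ is therefore false. For a concrete counterexample, take $k=2$ and the circuit $C = (\zeta_1+\zeta_2)\cdot(\zeta_1+\zeta_2)$, which computes $\zeta_1^2+2\zeta_1\zeta_2+\zeta_2^2$ and hence has the degree-$2$ multilinear term $2\zeta_1\zeta_2$. But $(v_1+v_2)\wedge(v_1+v_2)=0$ for \emph{every} choice of $v_1,v_2\in F^k$, so your algorithm rejects with probability~$1$. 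Conversely, the circuit $\zeta_1\zeta_2-\zeta_2\zeta_1$ computes the zero polynomial, yet evaluates to $2\,v_1\wedge v_2\neq 0$ over $\Lambda(F^k)$ in characteristic $\neq 2$, a false positive. The vanishing of non-multilinear monomials is indeed an identity in $\Lambda(F^k)$, but the circuit-to-polynomial correspondence itself breaks once the target algebra is non-commutative.

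The paper sidesteps exactly this issue by working in the \emph{commutative} Zeon algebra $Z(F^t)\cong F[X_1,\dots,X_t]/(X_i^2)$, realized inside $\Lambda(F^{2t})$ as the subalgebra generated by the lifted basis blades $\overline{\cano i}=\cano i\wedge\cano{i+t}$. Commutativity makes circuit evaluation well-defined, and the relations $X_i^2=0$ still kill all non-multilinear monomials. The substitution is $\zeta_i\mapsto \alpha_i\,\overline{\cano{c(i)}}$ for a random coloring $c\colon[n]\to[t]$ and random scalars~$\alpha_i$, i.e.\ color-coding; a fixed multilinear term survives iff its $k$ variables receive $k$ distinct colors. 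Taking $t=1.3k$ colors (H\"uffner--Wernicke--Zichner) makes this probability $\Omega(1.752^{-k})$, and one evaluation over $Z(F^t)$ costs $2^{1.3k}\cdot|C|\cdot\poly(n)$, whose product is $\approx 4.32^k$; this is where the constant actually comes from, not from headroom over~$4^k$. A secondary issue you do not address is that a general circuit may produce intermediate integers of doubly-exponential magnitude; the paper handles this by computing modulo a random prime of $\poly(n)$ bits.
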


Using algebraic fingerprinting with elements from a group algebra, Koutis and Williams~\cite{DBLP:conf/icalp/Koutis08,koutis2009limits} can do this in randomized~$2^k \cdot \poly(n)$ time for monotone algebraic circuits, that is, circuits that do not involve negative values.
Working over an algebra whose ground field of characteristic~$0$, we are able to remove the requirement that the circuit is free of cancellations in Theorem~\ref{thm detect multilinear}.
To the best of our knowledge, this is the first fixed-parameter tractable algorithm for the problem of detecting a $k$-multilinear term in the polynomial computed by a general algebraic circuit.
Our algorithm uses color-coding and performs the computation in the exterior algebra over~$\QQ^k$.
To reduce the running time from $2^ke^k \cdot \poly(n)$ to $4.32^k \cdot \poly(n)$, we use an idea of Hüffner, Wernicke, and Zichner~\cite{DBLP:journals/algorithmica/HuffnerWZ08}, who improved color-coding by using $1.3\cdot k$ instead of only~$k$ different colors.

\paragraph{Related hardness results.}
Under the exponential-time hypothesis (ETH) by Impagliazzo and Paturi~\cite{IP01}, the running time of the algorithm in Theorem~\ref{thm approx count sub} is optimal in the following asymptotic sense:
The exponent of~$n$ cannot be improved since $f(k) n^{o(t)}$ time is impossible
even in the case that~$H$ is a~$k$-clique~\cite{chen2004tight}, where $t=k-1$.
Likewise, a running time of the form~$\exp(o(k))\cdot\poly(n)$ is impossible even in the case that~$t=1$, since this would imply an $\exp(o(n))$ time algorithm for the Hamiltonian cycle problem and thereby contradict~ETH~\cite{ImpagliazzoPZ01}.
Moreover, the factor $\varepsilon^{-2}$ in the running time stems from an application of Chebyshev's inequality and is unlikely to be avoidable.

\subsection{Organization}
In the body text of the present manuscript, we focus entirely on paths instead of general subgraphs~$H$.
Section~\ref{sec: Lambda} contains an elementary development of the exterior algebra, deliberately eschewing abstract algebra.
Section~\ref{sec: Extensor-coding} then presents a number of different extensor-codings and establishes Theorems~\ref{thm approx count sub} and \ref{thm decide sub} for the case where the pattern graph $H$ is a $k$-path:
Theorem~\ref{thm approx count sub} corresponds to Algorithm~C and Theorem~\ref{thm: algorithm C} in Section~\ref{sec: Bernoulli coding};
Theorem~\ref{thm decide sub} corresponds to Algorithm~F and Theorem~\ref{thm: algorithm F} in Section~\ref{sec: Edge-Variables}.
Section~\ref{sec: Previous} is mainly expository and connects our approach to previous work.
The technical details needed to establish Theorems~\ref{thm approx count sub}--\ref{thm detect multilinear} in full generality are moved to the appendices.

\subsection{Graphs and Walks}

Let $G$ be a directed graph with $n$ vertices and $m$ edges.
The set of vertices is $V(G)$ and enumerated as  $\{v_1,\ldots, v_n\}$.
The set of edges is $E(G)$, the edge from $u$ to $v$ is denoted by~$uv$.
A sequence of vertices $w_1,\ldots,w_k$ in $V(G)$ such that $w_iw_{i+1} \in E$~holds for all $i \in \{1,\dots,k-1\}$
is called a $k$-\emph{walk} in $G$.
A walk of distinct vertices is called a \emph{path}. 
The set of $k$-walks is denoted by~$\walks$ and the set of $k$-paths is denoted by~$\paths$.
We write $\poly(n)$ for the set of polynomially bounded functions in~$n$.
Throughout the document, we silently assume $k \leq n$.

Let $R$ be a ring and consider a mapping $\xi\colon V(G)\cup E(G)\to R$.
The \emph{walk-sum} $\walkpol(G;\xi)$ of~$\xi$ is defined via
\begin{align}
\label{eq: walk-sum def}
  \walkpol(G;\xi) = \sum_{w_1\dots w_k\in\walks}
  \xi(w_1) \xi(w_1w_2) \xi(w_2)
  \cdots
  \xi(w_{k-1}) \xi(w_{k-1}w_k) \xi(w_{k})
  \,,
\end{align}
evaluated in $R$.
As a matter of folklore, the walk-sum can be evaluated with $O(kn^2)$ operations over $R$ using using a well-known connection with powers of the adjacency matrix:
\begin{equation}\label{eq: walk-sum matrixpower}
  \walkpol(G;\xi)
  =
    \begin{pmatrix}
      1 \hdots 1
    \end{pmatrix}
    \cdot
    A^{k-1}
    \cdot
    \begin{pmatrix}
      \xi(v_1)\\\vdots\\\xi(v_n)
    \end{pmatrix}
  \,,
\end{equation}
where $A$ is the $n\times n$ matrix whose $vw$-entry is given by
\begin{equation}\label{eq:abstract adjacency matrix}
  a_{vw}
  =
  \begin{cases}
    \xi(v)\xi(vw), &\text{if $vw\in E(G)$;}\\
    0,       &\text{otherwise.}
  \end{cases}
\end{equation}
Note that the expression for $f(G;\xi)$ in \eqref{eq: walk-sum matrixpower} can be evaluated in such a way that every product in $R$ has the form $x\cdot y$ where $y$ belongs to the range of~$\xi$ (rather than all of $R$).
Moreover, we assume input graphs to be given as adjacency lists, in which case the expression in~\eqref{eq: walk-sum matrixpower} can be evaluated with $O(k (n+m))$ operations over~$R$, since the product of an $m$-sparse matrix and a vector can be computed with $O(n+m)$ operations over~$R$ (equivalently, we can view this process as a distributed algorithm that computes~$(A^{k-1}\cdot(\xi(v_1)\dots\xi(v_n))^T)_v$ at each vertex~$v$ in $k-1$ rounds of synchronized communication).
If $\xi\colon V(G)\to R$ is a partial assignment, we silently extend it to a full assignment by setting the remaining variables to~$1\in R$.

\section{The Exterior Algebra}
\label{sec: Lambda}
\subsection{Concrete Definition}
\label{sec: concrete introduction}

We now give an elementary and very concrete definition of the exterior algebra, and recall the properties of the wedge product.
Readers familiar with this material can skip Section~\ref{sec: concrete introduction}. 

Let $F$ be a field, $k$ be a positive integer, and let~$\cano1,\dots,\cano k$ be the canonical basis of the $k$-dimensional vector space~$F^k$.
Every element $a$ of $F^k$ is a linear combination $a_1\cano 1+\cdots+a_k\cano k$ with field elements $a_1,\ldots,a_k\in F$.
We sometimes write $a$ as the column vector $(a_1,\ldots,a_k)^T$.
Addition and scalar multiplication are defined in the usual way.

We extend $F^k$ to a much larger, $2^k$-dimensional vector space $\Lambda(F^k)$ as follows.
Each basis vector $\cano I$ of $\Lambda(F^k)$ is defined by a subset $I$ of indices from $\{1,\ldots, k\}$.
The elements of $\Lambda(F^k)$ are called \emph{extensors}.
Each element is a linear combination $\sum_{I\subseteq \{1,\ldots,k\}} a_I \cano I$ of basis vectors.
We turn $\Lambda(F^k)$ into a  vector space by defining addition and scalar multiplication in the natural fashion.
For instance, if $F$ is the rationals, typical elements in $\Lambda(F^k)$ with $k=3$ are
\(x= 3 \cano {\{1, 2\}} - 7 \cano {\{3\}}\) and $y = \cano {\{1\}} + 2 \cano {\{3\}}$  and we have
$x+2y=  3 \cano {\{1,3\}}+ 2 \cano {\{1\}} - 3 \cano {\{3\}}$.
By confusing $\cano i$ with $\cano {\{i\}}$ for $i\in \{1,\dots,k\}$, we can view $F^k$ as a subspace of $\Lambda(F^k)$ spanned by the singleton basis vectors.
This subspace is sometimes called $\Lambda^1(F^k)$, the set of \emph{vectors}.
The element $\cano \emptyset$ is just $1$ in the underlying field, so $\Lambda^0(F^k)= F$.
In general, $\Lambda^i(F^k)$ is the set of extensors spanned by basis vectors $\cano I$ with $|I|=i$, sometimes called \emph{$i$-vectors}.
Of particular interest is $\Lambda^2(F^k)$, the set of \emph{blades} (also called bivectors).

To turn $\Lambda(F^k)$ into an \emph{algebra}, we define a multiplication $\land$ on the elements of $\Lambda(F^k)$.
The multiplication operator we define is called the \emph{wedge product} (also called exterior or outer product)
and the resulting algebra is called the \emph{exterior algebra}.
We require $\land$ to be associative
\[ (x\land y)\land z= x\land(y\land z)\]
and bilinear
\begin{align*}
x \land (a\cdot y+z) &= a\cdot x\land y + x\land z\,,\\
(x+a\cdot y) \land z &= x \land z + a\cdot y \land z\,,
\end{align*}
for all  $a\in F$ and  $x,y,z\in \Lambda(F^k)$.
Thus, it suffices to define how $\land$ behaves on a pair of basis vectors $\cano I$ and $\cano J$.
If $I$ and $J$ contain a common element, then we set $\cano I\land \cano J=0$.
Otherwise, we set $\cano I\land \cano J= \pm \cano {I\cup J}$; it only remains to define the sign, which requires some delicacy.
(The intuition is that we want $\land$ to be anti-commutative on $F^k$, that is, $x\land y=-y\land x$ for $x,y\in F^k$.)
Write $I=\{i_1,\ldots, i_r\}$ and $J=\{j_1,\ldots, j_s\}$, both indexed in increasing order.
Then we define
\[ \cano I\land \cano J = (-1)^{\sgn(I, J)} \cano {I\cup J}\,,\]
where $\sgn (I,J)$ is the sign of the permutation that brings the sequence
\(i_1,\ldots,i_r,j_1,\ldots, j_s\)
into increasing order.

For instance, if $\max I <\min J$, then there is nothing to permute, so $\cano 1\land \cano 2 = \cano {\{1,2\}}$. 
Consequently, we now abandon the set-indexed notation $\cano {\{i_1,\ldots, i_r\}}$ (where $i_1<\cdots < i_r$) and just write $\cano {i_1}\land\cdots\land \cano {i_r}$ instead.
It is also immediate that $\cano 1\land \cano 2 = -\cano 2\land \cano 1$.
In general, we can multiply basis vectors using pairwise transpositions and associativity, \emph{e.g.},
\( (\cano 1\land \cano 3\land \cano 6)\land (\cano 2\land \cano 4) =
 -\cano 1\land \cano 3\land \cano 2\land \cano 6\land \cano 4 =
 \cano 1\land \cano 2\land \cano 3\land \cano 6\land \cano 4 =
 -\cano 1\land \cano 2\land \cano 3\land \cano 4\land \cano 6\,.\)

\subsection{Properties}

 The wedge product on $F^k$ has the following properties:
 \begin{enumerate}[label=(W\arabic*)]
\item\label{vector alternating} \emph{Alternating on vectors.} By its definition, the wedge product enjoys anticommutativity on the basis vectors of $F^k$, which is to say $\cano i \land \cano j = -\cano j \land \cano i$.
Employing bilinearity, this directly translates to any two vectors $x,y \in F^k$, meaning $x \land y = -y \land x$ holds,
whereby $x \land x$ vanishes.  
\item\label{decomposable alternating} \emph{Alternating on decomposable extensors.}
    An extensor $x\in\Lambda(F^k)$ is \emph{decomposable} if there are vectors $v_1,\ldots,v_r\in F^k$ satisfying $x=v_1\land \cdots\land v_r$.
     Every extensor in $\Lambda^i(F^k)$ is decomposable for $i\in\{0,1,k-1,k\}$, but not all extensors are decomposable: $\cano 1\wedge \cano 2+\cano 2\wedge \cano 4\in \Lambda^2(F^4)$ is an example.
      The previous property extends to decomposable vectors: If the extensors $x_1,\cdots, x_r$ are decomposable and two of them are equal, then it follows from Property~\ref{vector alternating} that their wedge product $x_1\land\cdots\land x_r$ vanishes.
\item\label{wedge determinant}  \emph{Determinant on $F^{k\times k}$.}
  For $k=2$ write $x,y\in F^2$ as column vectors $(x_1,x_2)$ and $(y_1,y_2)$.
  Elementary calculations show
\( x\land y = (x_1 y_2 - y_1 x_2) \cdot \cano 1 \land \cano 2\,,\)
and we recognize the determinant of the $2\times 2$-matrix whose columns are~$x$ and~$y$.
This is not a coincidence. Since $\Lambda^k(F^k)$ is linearly isomorphic to $F$---indeed, $\Lambda^k(F^k) = F \cdot (\cano 1 \land \cdots \land \cano k)$---we can understand the map taking $(x_1,\ldots,x_k)$ to $x_1 \land \cdots \land x_k \in \Lambda^k(F^k) \cong F$ as a multilinear form, which by virtue of the previous properties is alternating and sends $(\cano 1,\ldots,\cano k)$ to $1$. These properties already characterize the determinant among the multilinear forms.
With this, we have arrived at a fundamental property of the exterior algebra.
Let $x_1,\dots,x_k\in F^k$ and write
\[
x_1= \begin{pmatrix} x_{11} \\ \vdots \\ x_{k1} \end{pmatrix}
\,,\qquad
\ldots
\,,
\qquad
x_k= \begin{pmatrix} x_{1k} \\ \vdots \\ x_{kk} \end{pmatrix}\,.
\]
The wedge product of~$x_1,\dots,x_k$ exhibits a determinant:
\begin{equation}\label{eq: wedge as det}
  x_1\land\cdots\land x_k = \det
  \begin{pmatrix}
    x_{11} & \cdots & x_{1k} \\
    \vdots & \ddots & \vdots\\ 		
    x_{k1} & \cdots & x_{kk} \\
  \end{pmatrix}
  \cdot \cano {[k]}
  \,,
\end{equation}
where we use the shorthand $\cano {[k]}$ for the highest-grade basis extensor $\cano 1 \land \cdots\land  \cano k$.
\end{enumerate}

To avoid a misunderstanding: Neither of these properties extends to all of $\Lambda(F^k)$.
For instance, if $x = \cano 1 \land \cano 3 + \cano 2$ then
$ x\land x =
 (\cano 1\land \cano 3 + \cano 2)\land(\cano 1\land \cano 3 + \cano 2) =
  \cano 1\land \cano 3\land \cano 1\land \cano 3 + \cano 1\land \cano 3\land \cano 2 + \cano 2\land \cano 1\land \cano 3 + \cano 2\land \cano 2 =
  0 - \cano 1\land \cano 2\land \cano 3 - \cano 1\land \cano 2\land \cano 3 + 0 =
  -2\cdot \cano 1\land \cano 2\land \cano 3\neq 0$.

\subsection{Representation and Computation}
\label{sec: rep and comp}

We represent an extensor $x\in\Lambda(F^k)$ by its coefficients in the expansion $x=\sum_{I\subseteq \{1,\ldots, k\}} x_I\cano I$, using $2^k$ elements~$x_I$ from $F$.
The sum $z=x+y$ is given by coefficient-wise addition $z_I = x_I+y_I$, requiring~$2^k$ additions in $F$.
The wedge product $z=x\wedge y$ is
\[
  \left(\sum_{I\subseteq K}x_I \cano I \right)\wedge\left(\sum_{J\subseteq K} y_J \cano J\right) =
  \sum_{I,J\subseteq K}x_Iy_J\cdot \cano I\wedge \cano J \,.
\]
When $y$ belongs to $\Lambda^j(F^k)$, we can restrict the summation to subsets~$J$ with $|J|=j$.
Thus, $x\wedge y$ for $x\in\Lambda(F^k)$ and $y\in\Lambda^j(F^k)$ can be computed using $2^k\binom{k}{j}$ multiplications in $F$.
This is the only wedge product we need for our results, and only for $j\in\{1,2\}$.

In particular, $\Lambda(F^k)$ is a ring with multiplication~$\wedge$.
Then, for a mapping $\xi\colon V(G)\rightarrow \Lambda^j(F^k)$, we can compute the walk-sum $f(G;\xi)$ from \eqref{eq: walk-sum def} using $O(n+m) 2^k\binom{k}{j}$ field operations, which is $(n+m)2^k\poly(k)$ for $j=O(1)$.

For completeness, the case where $y\in\Lambda(F^k)$ is a general extensor, can be computed faster than $4^k$.
By realizing that the coefficient $z_I$ is given by the \emph{alternating subset convolution}
\begin{equation}\label{eq: alternating subset convolution}
  z_I = \sum_{J\subseteq I} (-1)^{\sgn(J,I\setminus J)} x_J y_{I\setminus J}\,,
\end{equation}
we see that $x\wedge y$ can be computed in $3^k$ field operations. 
By following Leopardi \cite{leopardi2005generalized} and the subsequent analysis of W\l{}odarczyk \cite{DBLP:conf/iwpec/Wlodarczyk16}, this bound can be improved to $O^\ast(2^{\omega\frac{k}{2}})$, where $\omega$ is the exponent for matrix multiplication. 
This works by making use of an efficient embedding of a Clifford algebra related to $\Lambda(F^{k})$ into a matrix algebra of dimension ${2^{k/2} \times 2^{k/2}}$, and expressing one product in $\Lambda(F^{k})$ as $k^2$ products in this Clifford algebra. 
(We never need this.)

\section{Extensor-coding}
\label{sec: Extensor-coding}

\begin{table*}
  \renewcommand{\arraystretch}{1.1}
  \resizebox{\columnwidth}{!}{%
\begin{tabular}{llllll}
  \multicolumn{2}{l}{ Name}             &  $v_i\mapsto$                                       & $e\mapsto$ & Algebra & Section \\\midrule
  $\phi$           & Vandermonde        &  $(i^0,\ldots,i^{k-1})^T$                           & $1$      & $\Lambda(F^k)$ & \ref{sec: Vandermonde Vectors}, \ref{sec: baseline} \\
  $\overline\phi$  & Lifted Vandermonde &  $\overline{\phi(v_i)}$                             & $1$      & $\Lambda(F^{2k})$  & \ref{sec: deterministic alg}\\
  $\overline\beta$ & Lifted Bernoulli   &  $\overline{(\pm 1, \ldots, \pm 1)^T}$              & $1$      & $\Lambda(F^{2k})$ & \ref{sec: Bernoulli coding}\\
  $\eta$           & Edge-variable      &  $\phi(v_i)$                                        & $y_e$    & $\Lambda(F^k)[Y]$ & \ref{sec: Edge-Variables} \\
  $\rho$           & Random edge-weight &  $\phi(v_i)$                                        & Random $r\in\{1,\ldots,100k\}$ & $\Lambda(F^k)$ & \ref{sec: Random Edge-Weights} \\
  $\lambda$        & Labeled walks        &  $(x_i^{(1)},\ldots,x_i^{(k)})^T$                           & $y_e$      & $\Lambda(F^k)[X,Y]$ & \ref{sec: Labeled Walks} \\
  $\overline\chi$       & Color-coding       &  $\overline{\cano j}$, random $j\in\{1,\ldots,k\}$  & $1$      & $Z(F^k) \subset \Lambda(F^{2k})$  & \ref{sec: color-coding}\\
\end{tabular}
}
\caption{\small Extensor-codings of graphs used in this paper.}
\end{table*}

\subsection{Walk Extensors}
\label{sec: Walk Extensors}
An \emph{extensor-coding} is a mapping $\xi\colon V(G)\rightarrow \Lambda(F^k)$ associating an extensor with every vertex of~$G$.
If~$W$ is a walk~$w_1\dots w_\ell$ of length~$\ell$ in~$G$, then we define the \emph{walk extensor} $\xi(W)$ as
\[ \xi(W) = \xi(w_1)\land\cdots\land\xi(w_\ell)\,. \]
Suppose now that $\xi$ always maps to decomposable extensors.
We can formulate our main insight:

\begin{lem}\label{lem: main}%
  If $\xi(v)$ is decomposable for all $v\in V(G)$  and $W$ is not a path, then $\xi(W)=0$.
\end{lem}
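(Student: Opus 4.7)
The plan is to derive the conclusion directly from Property~\ref{decomposable alternating} (alternating on decomposable extensors), which has already been established. The key observation is that a walk fails to be a path precisely when some vertex occurs more than once in the sequence $w_1,\dots,w_\ell$, and in that case two of the factors in the wedge product $\xi(w_1)\wedge\cdots\wedge\xi(w_\ell)$ coincide.

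More concretely, I would proceed as follows. First, unpack the definition: since $W$ is not a path, there exist indices $i<j$ with $w_i=w_j$, hence $\xi(w_i)=\xi(w_j)$. Second, invoke the hypothesis that $\xi(v)$ is decomposable for every $v\in V(G)$, so in particular $\xi(w_1),\dots,\xi(w_\ell)$ are all decomposable extensors. Third, apply Property~\ref{decomposable alternating}: a wedge product of decomposable extensors in which two factors agree vanishes. Therefore $\xi(W)=\xi(w_1)\wedge\cdots\wedge\xi(w_\ell)=0$.

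There is essentially no obstacle here; the lemma is an immediate consequence of the alternating property lifted from vectors to decomposable extensors. The only subtlety worth spelling out is \emph{why} Property~\ref{decomposable alternating} applies, namely that once each $\xi(w_t)$ is itself a wedge product of vectors, the entire product $\xi(W)$ can be expanded into a single wedge of vectors in $F^k$, inside which the vectors coming from $\xi(w_i)$ appear twice; using associativity and anticommutativity of $\wedge$ on $F^k$ (Property~\ref{vector alternating}), one can then move a repeated vector next to its duplicate to produce a factor $v\wedge v=0$. This is exactly the argument sketched when the properties were introduced, so the proof consists of little more than citing it.
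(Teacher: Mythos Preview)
Your proposal is correct and matches the paper's own proof, which simply reads ``Directly follows from Property~\ref{decomposable alternating}.'' Your write-up merely unpacks that citation, which is fine.
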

\begin{proof}
Directly follows from Property~\ref{decomposable alternating}.
\end{proof}

In particular, the (easily computed) walk-sum of $\xi$ over the ring~$R$ with~$R=\Lambda(F^k)$ is a sum over paths:
\begin{equation}\label{eq:walkpol = pathpol}
  \walkpol(G;\xi) =
  \sum_{W\in\walks} \xi(W) =
    \sum_{P\in\paths} \xi(P)\,.
  \end{equation}

  We can view $\xi$ as the $(k\times n)$ matrix~$\Xi$ over~$F$ consisting of the columns $\xi(v_1),\dots,\xi(v_n)$.
  By~\eqref{eq: wedge as det}, we have 
  \begin{equation}\label{eq: d for xi}
    \xi(w_1\dots w_k) = d\cdot \cano {[k]}\,,
  \end{equation}
  where~$d$ is the determinant of the $(k\times k)$-matrix $\Xi_P$ of columns $\xi(w_i),\dots, \xi(w_k)$.
  This matrix is a square submatrix of~$\Xi$, and vanishes if two columns are the same.

  \medskip
  While it is terrific that non-paths vanish, we are faced with the dangerous possibility that $\walkpol(G;\xi)$ vanishes as a whole, even though~$\paths$ is not empty.
  There are two 
  %morally 
  distinct reasons why this might happen:
  the extensor~$\xi(P)$ might vanish for a path~${P\in\paths}$, or the sum of non-vanishing extensors~$\xi(P)$ vanishes due to cancellations in the linear combination.

  \subsection{Vandermonde Vectors}
  \label{sec: Vandermonde Vectors}

  To address the first concern, we consider an extensor-coding $\xi$ in \emph{general position}, that is, such that $\xi(w_1\dots w_k)\ne 0$ for all \mbox{$k$-tuples} of distinct vertices~$w_1\dots w_k$.
  Thus, $\xi$ is in general position if and only if all square submatrices of~$\Xi$ are non-singular.
  Rectangular Vandermonde matrices have this property.

  \begin{lem}\label{lem: vandermonde extensor}
    Let the \emph{Vandermonde extensor-coding~$\phi$ of~$G$} be
    \begin{equation}\label{eq:Vandermonde extensor}
      \phi(v_i)
      =
      (1,i^1,i^2,\dots,i^{k-1})^T\text{ for all $i\in\set{1,\dots,n}$}\,.
    \end{equation}
    If~$i_1,\dots,i_k\in\set{1,\dots,n}$, then
    \[ \phi(v_{i_1}\dots v_{i_k})=\det \Phi_P \cdot \cano {[k]}\,,\]
    where
      \begin{equation}\label{eq: Vandermonde matrix}
      \Phi_P=\begin{pmatrix}
        1 & 1 & \hdots & 1 \\
        i_1 & i_2 & \hdots & i_k \\
        %i_1^2 & i_2^2 & \hdots & i_k^2 \\
        \vdots&\vdots &\ddots&\vdots   \\
        i_1^{k-1} & i_2^{k-1}&\hdots & i_k^{k-1}
      \end{pmatrix}\,.
      \end{equation}
      In particular,
    \begin{equation}\label{eq:Vandermonde determinant}
      d = \det \Phi_P
      =
      \prod_{\substack{i_a,i_b\\a<b}} (i_a-i_b)
      \,.
    \end{equation}
  \end{lem}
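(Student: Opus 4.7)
The plan is direct: the first assertion is an immediate instance of equation~\eqref{eq: wedge as det}, and the second is the classical Vandermonde determinant identity.

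First I would unfold the definition of the walk extensor,
\[
  \phi(v_{i_1}\dots v_{i_k}) = \phi(v_{i_1})\wedge\cdots\wedge\phi(v_{i_k})\,,
\]
and observe that, by the defining formula~\eqref{eq:Vandermonde extensor}, stacking the vectors $\phi(v_{i_a})\in F^k$ as columns produces precisely the matrix $\Phi_P$ displayed in~\eqref{eq: Vandermonde matrix}. Property~\ref{wedge determinant}, i.e.\ equation~\eqref{eq: wedge as det}, then yields $\phi(v_{i_1}\dots v_{i_k}) = \det\Phi_P\cdot\cano{[k]}$ with no further computation.

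For the determinant evaluation, I would run the classical argument. View $\det\Phi_P$ as a polynomial in the formal indeterminates $i_1,\dots,i_k$. Setting $i_a=i_b$ for any $a\neq b$ makes two columns of $\Phi_P$ coincide, forcing $\det\Phi_P=0$; hence each linear factor $(i_a-i_b)$ divides $\det\Phi_P$, and since these factors are pairwise coprime, so does their product $\prod_{a<b}(i_a-i_b)$. Both $\det\Phi_P$ and $\prod_{a<b}(i_a-i_b)$ are polynomials of total degree $0+1+\cdots+(k-1)=\binom{k}{2}$, so they agree up to a multiplicative constant. This constant is pinned down to $\pm 1$ by comparing the coefficient of the diagonal monomial $i_1^{0}i_2^{1}\cdots i_k^{k-1}$ on both sides.

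There is no real obstacle here: both ingredients are already in place, and only the sign convention in the Vandermonde product needs minor care. Since the lemma will be used through the non-vanishing criterion ``$\det\Phi_P\neq 0$ iff $i_1,\dots,i_k$ are pairwise distinct,'' the precise sign is not load-bearing for the subsequent applications.
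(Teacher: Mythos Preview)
Your proposal is correct and matches the paper's intent: the paper actually gives no proof of this lemma at all, treating both claims as immediate consequences of Property~\ref{wedge determinant} (equation~\eqref{eq: wedge as det}) and the classical Vandermonde determinant identity, and your write-up simply fills in those standard details. Your closing remark about the sign being immaterial for later use is apt, since only the non-vanishing of $\det\Phi_P$ for distinct $i_1,\dots,i_k$ is ever invoked.
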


\subsection{Baseline Algorithm}
\label{sec: baseline}

  Our second concern was that distinct non-vanishing paths might lead to extensors~$\phi(P)$ that cancel in the sum in~\eqref{eq:walkpol = pathpol}.
  Let us consider a case where this never happens by assuming that the graph~$G$ has at most one $k$-path.
  Then the sum over paths in~\eqref{eq:walkpol = pathpol} has at most one term and cancellations cannot occur.

This allows us to establish Thm.~\ref{thm decide sub} for the special case where $H$ is the $k$-path and the number~$C$ of occurrences of $H$ in $G$ is either zero or one.

\begin{algor}{U}{Detect unambiguous $k$-path.}{%
  Given directed graph~$G$ and integer $k$, such that the number of $k$-paths in $G$ is $0$ or $1$, this algorithm determines if $G$ contains a $k$-path.}
  \item [U1] (Set up $\phi$.) Let $F=\QQ$. 
    Let~$\phi$ be the Vandermonde extensor-coding as in~\eqref{eq:Vandermonde extensor}.
  \item [U2] (Compute the walk-sum)
    Compute $\walkpol(G;\phi)$ as in \eqref{eq:abstract adjacency matrix}.
  \item [U3] (Decide.)
    If $\walkpol(G;\phi)$ is non-zero, then return `yes.' 
    Otherwise, return `no.' \qed
\end{algor}

\begin{thm} \label{thm decide path}
  Algorithm U is a deterministic algorithm for the unambiguous $k$-path problem with running time $2^k (n+m)\poly(k)$.
\end{thm}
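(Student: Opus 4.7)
The plan is to split the argument into a correctness part and a running time part. For correctness, I would start by invoking Lemma~\ref{lem: main}: since the Vandermonde coding $\phi$ sends every vertex to a vector in $\Lambda^1(F^k)$, which is automatically decomposable, every non-path walk contributes $0$ to $\walkpol(G;\phi)$. Combined with \eqref{eq:walkpol = pathpol}, this collapses the walk-sum to $\sum_{P\in\paths}\phi(P)$. Using the promise $|\paths|\le 1$, either the sum is empty and $\walkpol(G;\phi)=0$, or it consists of a single term $\phi(P)$ for the unique $k$-path $P=v_{i_1}\cdots v_{i_k}$.

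In the latter case, I would appeal to Lemma~\ref{lem: vandermonde extensor} to write
\[
  \phi(P)=\det\Phi_P\cdot\cano{[k]}=\Bigl(\prod_{a<b}(i_a-i_b)\Bigr)\cdot\cano{[k]}\,.
\]
Because $P$ is a path, the indices $i_1,\dots,i_k$ are pairwise distinct, so each factor $i_a-i_b$ is nonzero in $F=\QQ$, and hence the product is nonzero. Therefore Step~U3 returns `yes' exactly when a $k$-path exists. Note there is no danger of cancellation here: the entire point of the promise is to sidestep the second concern raised at the end of Section~\ref{sec: Walk Extensors}, so no further probabilistic or algebraic argument is needed.

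For the running time, I would plug the extensor-coding into the matrix-vector formulation~\eqref{eq: walk-sum matrixpower}. Since $\phi(v_i)\in\Lambda^1(F^k)$, the text following~\eqref{eq: walk-sum matrixpower} guarantees that $\walkpol(G;\phi)$ can be evaluated with $O(k(n+m))$ ring operations in $R=\Lambda(F^k)$, and each such operation has the form $x\wedge y$ with $y\in\Lambda^1(F^k)$. By Section~\ref{sec: rep and comp}, one such wedge costs $2^k\binom{k}{1}=k\cdot 2^k$ field operations. Multiplying gives the claimed $2^k(n+m)\poly(k)$ bound on the number of arithmetic operations in $\QQ$.

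The only subtle point I would have to check is bit complexity: the Vandermonde entries $i^{k-1}$ and the subsequent intermediate coefficients in $\Lambda(\QQ^k)$ have bit-length polynomial in $k$ and $\log n$, so absorbing this into the $\poly(k)$ factor (and using $k\le n$) is straightforward. Everything else is bookkeeping, so I expect no real obstacle beyond carefully citing the two lemmas and the cost estimate from Section~\ref{sec: rep and comp}.
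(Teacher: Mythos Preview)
Your proposal is correct and follows essentially the same line as the paper's own proof: collapse the walk-sum to a path-sum via Lemma~\ref{lem: main} and~\eqref{eq:walkpol = pathpol}, use the promise to reduce to a single term, invoke Lemma~\ref{lem: vandermonde extensor} to see that term is nonzero, and bound the cost via the skew-product estimate from Section~\ref{sec: rep and comp} together with the observation that all intermediate integers have magnitude at most $n^{\poly(k)}$. Your treatment is in fact slightly more explicit than the paper's (you spell out why the Vandermonde determinant is nonzero and why decomposability holds), but there is no substantive difference in approach.
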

\begin{proof}
  Consider the extensor $\walkpol(G;\phi)$ computed in Step~U2.
  If~$G$ contains no $k$-path, then $\walkpol(G;\phi)=0$ holds by~\eqref{eq:walkpol = pathpol}.
  Otherwise, we have~$\walkpol(G;\phi)=\phi(P)$ for the unambiguous $k$-path~$P$ in $G$.
  Let $P=v_{i_1}\ldots v_{i_k}$.
  By our choice of $\phi$ in U1, Lemma~\ref{lem: vandermonde extensor} implies $\walkpol(G;\phi)=d\cdot \cano {[k]}$ with~$d\ne 0$.

  The running time of Algorithm~U is clearly dominated by U2.
  As we discussed in Sec.~\ref{sec: rep and comp}, the value $f(G,\phi)$ can be computed with~$k\cdot O(n+m)$ operations in $\Lambda(F^k)$, each of which can be done with~$O(k2^k)$ operations in~$F$.
  The Vandermonde extensor-coding~$\phi$ uses only integer vectors and the absolute value of $f(G,\phi)$ is bounded by~$n^{\poly(k)}$.
  In the usual word-RAM model of computation with words in~$\set{-n,\dots,+n}$, we can thus store each number using~$\poly(k)$ words.
  We conclude that Algorithm~U has the claimed running time.
\end{proof}

\subsection{Blades and Lifts}

The reason that cancellations can occur in~\eqref{eq:walkpol = pathpol} is that the coefficients~$d\in F$ in \eqref{eq: d for xi} may be negative.
We will now give a general way to modify an extensor-coding in such a way that these coefficients become~$d^2$ and thus are always positive.

Instead of~$\Lambda(F^k)$, we will now work over $\Lambda(F^{2k})$.
For an extensor $x=\sum_{i\in\set{1,\dots,k}} a_i\cano i\in F^k\subseteq \Lambda(F^k)$, we define its \emph{lifted} version~${\overline x\in\Lambda^2(F^{2k})}$ as the blade
\begin{equation}\label{eq: tensor square embedding}
\overline x
=
\paren[\Big]{
\sum_{i\in\set{1,\dots,k}} a_i\cano i
}
\land
\paren[\Big]{
  \sum_{j\in\set{1,\dots,k}} a_j\cano {j+k}
}
\,.
\end{equation}
If we let $0\in F^k$ denote the zero vector in $F^k$, we can write this as \[\overline{x}= \begin{pmatrix}x\\0\end{pmatrix}\wedge\begin{pmatrix}0\\x\end{pmatrix}\,.\]
  Crucially, every $\overline{x}$ is decomposable, so Lemma~\ref{lem: main} applies. 

  For an extensor-coding~$\xi\colon V(G)\to F^k$, we define the lifted extensor-coding~$\overline\xi\colon V(G)\to\Lambda(F^{2k})$ by setting $\overline\xi(v)=\overline{\xi(v)}$ for all~${v\in V(G)}$.
For a path~$P\in\mathscr P$, with $P=w_{1}\cdots w_{k}$, the correspondence between~$\xi(P)$ and $\overline{\xi}(P)$ is as follows.
Consider the $k\times k$ matrix $\Xi_P$ of extensors given by
\[ \Xi_P = \begin{pmatrix} \xi(w_1)\hdots \xi(w_k)\end{pmatrix}\,. \]
  From Property \ref{wedge determinant}, we get \[\xi(P) = (\det \Xi_P) \cano {[k]}\,,\]
and
\[
    \overline{\xi} (P) = 
  \det \begin{pmatrix} 
  \xi(w_1) & 0        & \hdots & \xi(w_k) & 0 \\
  0        & \xi(w_1) & \hdots & 0        & \xi(w_k)
  \end{pmatrix}
\cano {[2k]}\,.\]
Using basic properties of the determinant, we can rewrite the coefficient of $\cano {[2k]}$ to
\begin{multline*}
(-1)^{\binom{k}{2}}
 \det \begin{pmatrix} 
  \xi(w_1) &  \hdots & \xi(w_k) & 0         & \hdots & 0  \\
  0        &  \hdots & 0        & \xi(w_1)  & \hdots & \xi(w_k)
\end{pmatrix}  = \\
(-1)^{\binom{k}{2}}
(\det \Xi_P) \cdot (\det \Xi_P)  =
(-1)^{\binom{k}{2}} (\det \Xi_P)^2 \,.
\end{multline*}
Thus, we have
\[ \overline{\xi}(P) =\pm (\det \Xi_P)^2 \cano {[2k]}\,,\]
where the sign depends only on $k$.

We evaluate the walk-sum over $\Lambda(F^{2k})$ at $\overline\xi$ to obtain:
\begin{equation}\label{eq: detsquared}
  \walkpol(G;\overline\xi)
  =\pm
  \sum_{P\in\mathscr P} (\det \Xi_P)^2
  \cdot
  \cano {[2k]}
\,.
\end{equation}

\subsection{Deterministic Algorithm for Path Detection}
\label{sec: deterministic alg}%
As an application of the lifted extensor-coding, let~$\phi\colon V(G)\to F^k$ be the Vandermonde extensor-coding from Lemma~\ref{lem: vandermonde extensor}.
We imitate Algorithm~U to arrive at a deterministic algorithm for $k$-path.
Our algorithm slightly improves upon the time bound of $4^{k+o(k)}\cdot \poly(n)$ of Chen \emph{et al.}~\cite{DBLP:conf/soda/ChenLSZ07,doi:10.1137/080716475},
but does not come close to the record bound $2.5961^k \cdot \poly(n)$ of Zehavi~\cite{Zehavi14}.
\begin{thm}[Superseded by \cite{Zehavi14}]
There is a deterministic algorithm that, given a directed graph $G$,
checks if~$G$ has a path of length~$k$ in time $4^k (n+m)\poly(k)$.
\end{thm}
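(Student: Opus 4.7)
The plan is to mimic Algorithm~U, but with the lifted Vandermonde extensor-coding $\overline{\phi}\colon V(G)\to\Lambda(F^{2k})$ in place of $\phi$. Concretely, set $F=\QQ$, let $\phi$ be the Vandermonde extensor-coding from Lemma~\ref{lem: vandermonde extensor}, form its lift $\overline\phi$ as in~\eqref{eq: tensor square embedding}, compute the walk-sum $\walkpol(G;\overline\phi)$ in the algebra $\Lambda(F^{2k})$ via~\eqref{eq: walk-sum matrixpower}, and output ``yes'' iff this walk-sum is nonzero.

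For correctness, by~\eqref{eq: detsquared} applied with $\xi=\phi$ we have
\[
  \walkpol(G;\overline\phi) = \pm\Bigl(\sum_{P\in\paths} (\det \Phi_P)^2\Bigr) \cdot \cano{[2k]}\,,
\]
where $\Phi_P$ is the Vandermonde matrix from~\eqref{eq: Vandermonde matrix} associated with $P$. Each summand is a nonnegative rational, so no cancellation is possible; moreover, Lemma~\ref{lem: vandermonde extensor} guarantees $\det\Phi_P\ne 0$ for every $k$-path $P$ (the indices $i_1,\dots,i_k$ are distinct). Hence $\walkpol(G;\overline\phi)\ne 0$ if and only if $\paths\ne\emptyset$. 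This is the central conceptual gain of lifting: squaring kills the signs that ruined the baseline algorithm whenever $G$ had more than one $k$-path.

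For the running time, each $\overline\phi(v)$ lies in $\Lambda^2(F^{2k})$, so by the discussion in Section~\ref{sec: rep and comp} a wedge product $x\wedge\overline\phi(v)$ with $x\in\Lambda(F^{2k})$ costs $2^{2k}\binom{2k}{2} = 4^k\cdot \poly(k)$ field operations. Evaluating~\eqref{eq: walk-sum matrixpower} over the ring $\Lambda(F^{2k})$ requires $O(k(n+m))$ such wedges (acting on the iteratively computed vector of extensors at each vertex), for a total of $4^k(n+m)\poly(k)$ field operations. The entries of $\overline\phi(v_i)$ are integers of magnitude $n^{O(k)}$, and after $k$ rounds of wedging and summing over $O(n+m)$ terms the coefficients remain of bit-size $\poly(k,\log n)$, so each field operation costs only $\poly(k)$ time in the standard word-RAM model, giving the claimed $4^k(n+m)\poly(k)$ bound.

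The step I would expect to need the most care is not the conceptual one—the lifting identity~\eqref{eq: detsquared} does all the heavy lifting—but rather the book-keeping that the wedge with a blade (rather than a vector) costs $4^k\poly(k)$ instead of $2^k\poly(k)$: this is where the extra factor of $2^k$ compared to Algorithm~U enters, and it is essential to observe that $\overline\phi(v)\in\Lambda^2(F^{2k})$ so that one can invoke the restricted wedge-product bound $2^{2k}\binom{2k}{2}$ rather than the general $O(2^{\omega k})$ bound mentioned at the end of Section~\ref{sec: rep and comp}.
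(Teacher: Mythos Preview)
Your proposal is correct and follows essentially the same approach as the paper: replace $\phi$ by its lift $\overline\phi$ in Algorithm~U, invoke~\eqref{eq: detsquared} for correctness, and use the skew wedge-product cost in $\Lambda(F^{2k})$ for the $4^k$ running-time bound. Your write-up is in fact slightly more detailed than the paper's (you spell out why no cancellation occurs and why the integers stay $n^{\poly(k)}$-bounded), but the argument is the same.
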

\begin{proof}
  The algorithm is just Algorithm~U, except that we evaluate the walk-sum over~$\Lambda(F^{2k})$ and at~$\overline\phi$.
  The correctness of this algorithm follows from~\eqref{eq: detsquared}.
  Each addition~$y+z$ in~$\Lambda(F^{2k})$ can be carried out using~$O(2^{2k})$ addition operations in~$F$, and each multiplication~$y\wedge\overline x$ with elements of the form~$\overline x$ for~$x\in F^k$ takes at most~$O(2^{2k}k^2)$ operations in~$F$, as discussed in~Sec.~\ref{sec: rep and comp}.
  Overall, this leads to the claimed running time.
\end{proof}

\newcommand{\E}{\operatornamewithlimits{\mathbf{E}}}
\newcommand{\Var}{\operatorname{Var}}
\newcommand{\Cov}{\operatorname{Cov}}

\subsection{Bernoulli Vectors}
\label{sec: Bernoulli coding}

We present our algorithm for approximate counting.
Now instead of the Vandermonde extensor-coding as in Lemma~\ref{lem: vandermonde extensor}, we sample an extensor-coding~$\beta\colon V(G)\to \set{-1,1}^k$ uniformly at random.

The approximate counting algorithm is based on the following observation: 
If $B_P$ is the $k\times k$ matrix corresponding to $\beta(w_1)$, $\ldots$, $\beta(w_k)$, then
all matrices~$B_P$ are sampled from the same distribution.
Thus, the random variables $\det B_P^2$ have the same mean~${\mu>0}$.
The expectation of the sum of determinant squares is~$\mu\cdot\abs{\paths}$, from which we can recover an estimate for the number of paths.
Our technical challenge is to bound the variance of the random variable~$\det B_P^2$.

\begin{algor}{C}{Randomized counting of $k$-path.}{%
  Given directed graph $G$ and integers $k$ and $t$, approximately counts the number of $k$-paths using $t$ trials.}
\item [C1] (Initialize.) Set $j=1$.
\item [C2] (Set up $j$th trial.)
    For each $i\in\{1,\ldots,n\}$,
    let $\beta(v_i)$ be a column vector of $k$ values chosen from $\pm 1$ independently and uniformly at random.
  \item [C3] (Compute scaled approximate mean $X_j$.) 
    Compute~$X_j$ with $\walkpol(G;\overline\beta)= X_j\cdot \cano {[2k]}$.
  \item [C4] (Repeat $t$ times.) If $j<t$ then increment $j$ and go to C2.
  \item [C5] (Return normalized average.)
    Return $(X_1+\cdots +X_t)/(k! t)$
\end{algor}

We are ready for the special case of Theorem~\ref{thm approx count sub}, approximating $\Sub(H,G)$ when $H$ is the $k$-path.
In this case, $\Sub(H,G)= |\paths|$.

\begin{thm}\label{thm: algorithm C}
  For any $\varepsilon>0$, Algorithm~C
  produces in time $(4^k/ \varepsilon^2)\cdot (n+m) \cdot \poly(k)$ 
  a value $X$ such that with probability at least~$99\%$, we have
\[
  (1-\varepsilon) \cdot  |\paths|\leq X \leq (1+\varepsilon)\cdot |\paths|\,.
\]
\end{thm}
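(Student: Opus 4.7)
The plan is to show that each trial $Y_j := X_j/k!$ produced in Step~C5 is an unbiased estimator of $\abs{\paths}$ with second moment bounded by $O(\abs{\paths}^2)$; Chebyshev's inequality then turns $t = O(1/\varepsilon^2)$ independent trials into the desired $99\%$ guarantee. The per-trial running time is dominated by a single walk-sum computation over the algebra $\Lambda(F^{2k})$, which by Section~\ref{sec: rep and comp} costs $O\bigl((n+m)\cdot 4^k\cdot\poly(k)\bigr)$ field operations because every wedge involves a blade in $\Lambda^2(F^{2k})$; multiplying by $t$ gives the claimed running time.

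First I would compute the mean. By~\eqref{eq: detsquared}, the scalar $X_j$ equals $\pm\sum_{P\in\paths}(\det B_P)^2$, where $B_P$ is the random $k\times k$ matrix whose columns are $\beta(w_1),\dots,\beta(w_k)$ for $P = w_1\cdots w_k$. Expanding
\[
(\det B_P)^2 = \sum_{\sigma,\tau\in\sym k}\sgn(\sigma)\sgn(\tau)\prod_{i=1}^k \beta(w_i)_{\sigma(i)}\beta(w_i)_{\tau(i)}
\]
and using that entries of $\beta$ are independent $\pm 1$ with $\E[\beta(v)_a\beta(v)_b] = [a = b]$, only the diagonal terms $\sigma = \tau$ survive, each contributing $1$. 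Hence $\E[(\det B_P)^2] = k!$ for every path $P$, and so $\E[Y_j] = \abs{\paths}$ after absorbing the global $\pm$ sign.

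Next I would bound the variance. Writing $D_P := (\det B_P)^2$, we have
\[
\Var(Y_j) = \frac{1}{(k!)^2}\sum_{P,Q\in\paths}\Cov(D_P, D_Q).
\]
If the vertex sets of $P$ and $Q$ are disjoint, then $B_P$ and $B_Q$ use disjoint blocks of the random vectors, so they are independent and $\Cov(D_P, D_Q) = 0$. If $P$ and $Q$ share a set $S$ of $s$ vertices, I expand $D_PD_Q$ as a quadruple sum over permutations $\sigma,\tau,\sigma',\tau'$; vertices outside $S$ still impose $[\sigma(i)=\tau(i)]$ or $[\sigma'(j)=\tau'(j)]$, while each shared vertex contributes a fourth moment $\E[\beta(v)_a\beta(v)_b\beta(v)_c\beta(v)_d]$ that equals $1$ exactly when the four indices admit a perfect pairing and $0$ otherwise. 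A careful combinatorial accounting shows that the contribution from pairs sharing $s$ vertices is bounded by $O\bigl((k!)^2\bigr)$ per pair, and after summing over the at most $\abs{\paths}^2$ interacting pairs one obtains $\Var(Y_j) \le C\abs{\paths}^2$ for an absolute constant $C$. Chebyshev's inequality applied to $Y := (Y_1+\cdots+Y_t)/t$ with $t = \lceil 100C/\varepsilon^2\rceil$ then yields $\Pr\bigl[\,\abs{Y - \abs{\paths}} \ge \varepsilon\abs{\paths}\,\bigr] \le 1/100$, as required.

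The main obstacle is the variance bound in the second step: one must check that pairs of paths sharing many vertices do not contribute disproportionately, because the combinatorial number of such pairs grows while the per-pair covariance need not shrink at the same rate. The cleanest way I see is to show pointwise that the expanded sum over $\sigma,\tau,\sigma',\tau'$ still collapses to $O((k!)^2)$ terms regardless of $s$, exploiting that the pairing constraint on the shared vertices ties $\sigma'(j)$ to $\sigma(i)$ or $\tau(i)$ at each shared position. Rounding $t$ to an integer and noting that $\abs{X_1+\cdots+X_t}/(k!t)$ is computed exactly from integer-valued intermediate results completes the analysis.
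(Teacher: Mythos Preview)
Your high-level plan---unbiased estimator, second-moment bound, Chebyshev, and the $4^k$ per-trial cost from skew multiplication by blades in $\Lambda(F^{2k})$---matches the paper's proof, and your expectation calculation is correct.

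The variance step, however, contains a genuine error. Your claim that each pair $(P,Q)$ contributes $O((k!)^2)$ with an \emph{absolute} constant is already false on the diagonal $P=Q$: there you need $\E[(\det B)^4]$, and for a $k\times k$ Bernoulli matrix this is $(k!)^2\cdot\Theta(\poly(k))$, not $O((k!)^2)$. (The paper quotes $\E\det B^4\le (k!)^2k^3$ from~\cite{Nyquist54}; the appendix derives $\tfrac12(k!)^2(k+1)(k+2)$ for a related distribution.) Hence $\Var(Y_j)\le C\abs{\paths}^2$ fails for any absolute~$C$, and your choice $t=\lceil 100C/\varepsilon^2\rceil$ is too small. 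You must take $t=\poly(k)/\varepsilon^2$; the theorem's running time still holds because the stated bound already carries a $\poly(k)$ factor.

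The paper also handles the covariance differently and bypasses your case analysis on the overlap~$s$. It simply applies Cauchy--Schwarz uniformly:
\[
\Cov(D_P,D_Q)\le\sqrt{\Var D_P\cdot\Var D_Q}=\Var D_P\le\E[(\det B)^4]\le(k!)^2k^3,
\]
and then sets $t=100k^3/\varepsilon^2$. Your observation that vertex-disjoint $P,Q$ have zero covariance is correct but buys nothing in the worst case, and the ``careful combinatorial accounting'' you defer for overlapping pairs would, even if carried out, at best recover the same $\poly(k)$ loss that Cauchy--Schwarz plus the fourth-moment bound deliver in two lines.
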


A matrix whose entries are i.i.d.\ random variables taking the values $+1$ and $-1$ with equal probability $\frac{1}{2}$ is called \emph{Bernoulli}. 
We need a result from the literature about the higher moments of the determinant of such a matrix.

\begin{thm}[\cite{Nyquist54}] \label{thm:randmat}
Let $B$ be a $k\times k$ Bernoulli matrix.
Then,
\begin{align}
  \E\det B^2 &= k! \label{eq:secmom}\\
\E \det B^4 &\leq (k!)^2 \cdot k^3 \,. \label{eq:fourthmom}
\end{align}
\end{thm}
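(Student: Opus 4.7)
The plan is to apply the Leibniz expansion of the determinant and exploit row independence together with the identities $\exval{B_{ij}} = 0$ and $\exval{B_{ij}^2} = 1$.

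For \eqref{eq:secmom}, squaring the expansion $\det B = \sum_{\sigma\in\sym{k}}\sgn(\sigma)\prod_i B_{i,\sigma(i)}$ and taking expectation yields
\[
\exval{\det B^2} = \sum_{\sigma,\tau\in\sym{k}}\sgn(\sigma)\sgn(\tau)\prod_{i=1}^k \exval{B_{i,\sigma(i)}B_{i,\tau(i)}}.
\]
Each row factor equals $1$ when $\sigma(i)=\tau(i)$ and $0$ otherwise, so only the diagonal $\sigma=\tau$ survives, contributing $\sgn(\sigma)^2=1$ each and summing to $k!$.

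For \eqref{eq:fourthmom} the same idea gives
\[
\exval{\det B^4} = \sum_{\sigma_1,\ldots,\sigma_4\in\sym{k}}\sgn(\sigma_1)\sgn(\sigma_2)\sgn(\sigma_3)\sgn(\sigma_4)\prod_{i=1}^k \exval{\prod_{j=1}^4 B_{i,\sigma_j(i)}}.
\]
The $i$-th row expectation equals $1$ precisely when every column index appears an even number of times among $\sigma_1(i),\ldots,\sigma_4(i)$, and vanishes otherwise; this occurs exactly when either all four values coincide or they split into two equal pairs in one of three possible pairings. Bounding the product of signs by $1$, the fourth moment is at most the number of permutation $4$-tuples satisfying this row-wise condition.

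The main obstacle is estimating this combinatorial count. The three ``global'' pairings $\sigma_1{=}\sigma_2,\sigma_3{=}\sigma_4$; $\sigma_1{=}\sigma_3,\sigma_2{=}\sigma_4$; and $\sigma_1{=}\sigma_4,\sigma_2{=}\sigma_3$ account for $3(k!)^2$ tuples on their own (modulo the $k!$ fully-degenerate overlap), but the subtlety is that the row-local nature of the condition allows the pairing pattern to change from row to row. Still, an admissible $4$-tuple is essentially determined by two of the permutations together with per-row pattern data, and a careful row-by-row enumeration---as carried out by Nyquist, Rice, and Riordan~\cite{Nyquist54}---absorbs the additional configurations into a polynomial factor of order $k^3$, yielding the bound $(k!)^2 k^3$.
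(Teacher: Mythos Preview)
Your second-moment argument via the Leibniz expansion is correct and complete.

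For the fourth moment, however, the proof stops exactly where you yourself identify ``the main obstacle'': you must count the $4$-tuples $(\sigma_1,\dots,\sigma_4)\in\sym{k}^4$ such that in every row~$i$ the multiset $\mset{\sigma_1(i),\dots,\sigma_4(i)}$ has only even multiplicities, and show this count is at most $(k!)^2 k^3$. You gesture at the structure (``essentially determined by two of the permutations together with per-row pattern data'') and then defer the estimate to \cite{Nyquist54}. Since the theorem is already attributed to that reference, this is circular as a standalone proof. The difficulty is real: the per-row pairing pattern can vary, yet $\sigma_3$ and $\sigma_4$ must remain global permutations, and reconciling these constraints is where the work lies.

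The paper takes a different route. Instead of the Leibniz expansion, it uses cofactor expansion along the first row to set up a recurrence in~$k$. For the second moment this immediately gives $\E\det B^2 = k\,\E\det B_{11}^2$, solving to~$k!$. For the fourth moment, a two-step expansion (first a row, then a column of the resulting minors) produces a relation that eliminates an auxiliary cross-term and collapses to the clean one-term recurrence $f_k = k(k+2)f_{k-1}$, whence $f_k = \tfrac{1}{2}(k!)^2(k+1)(k+2) \le (k!)^2 k^3$. This sidesteps the permutation-tuple counting entirely. One caveat: the paper carries out this derivation not for the $\pm 1$ Bernoulli distribution of the theorem statement, but for a three-point distribution on $\{-\sqrt 3,0,\sqrt 3\}$ engineered so that $\E b_{ij}^4 = 3$; this specific fourth moment is what makes the recurrence close. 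So the paper, like you, ultimately relies on \cite{Nyquist54} for the $\pm 1$ case as literally stated, but it does supply a complete self-contained proof of the variant it actually uses in Algorithm~C.
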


For completeness, we include a careful proof for a slightly different distribution in Appendix~\ref{sec: random matrix}.

\begin{proof}[Proof of Theorem~\ref{thm: algorithm C}]
  Run algorithm C with $t=100 k^3/\varepsilon^{2}$.
  Set $\mu = |\paths|$.
  Recall from~\eqref{eq: detsquared} that $X_j$ can be written as
  \begin{equation}\label{eq:X_j}
    X_j =\pm \paren{ \det B_1^2 + \det B_2^2 +\cdots +\det B_\mu^2 }\,,
\end{equation}
where for $i\in \{1,\ldots,\mu\}$, each $B_i$ is a submatrix of of the $k\times n$ matrix with columns  $ \beta(v_1)$, $\beta(v_2)$,$\cdots$, $\beta(v_n)$.
The sign can be easily computed and only depends on~$k$; we assume without loss of generality that it is $+1$.
  By our choice of $\beta$ in Step~C2, each $B_i$ is therefore a Bernoulli matrix, but they are not independent.

  By Theorem~\ref{thm:randmat}, we have $\E \det B_i^2 = k!$ for each $i\in\{1,\ldots, \mu\}$, so by linearity of expectation,
  \[ \E X_j = \mu k!\,.\]
 
  We turn to $\Var X_j$, which requires a bit more attention.
  For all $i,\ell\in \set{1,\ldots, \mu}$, the matrices $B_i$ and $B_\ell$ 
  follow the same distribution, so $\Var \det B_i^2 = \Var \det B_\ell^2$.
  Thus, using Cauchy--Schwartz, we have
  \begin{multline*} \Cov (\det B_i^2,\det B_\ell^2) =  
    \sqrt{(\Var \det B_i^2)\cdot(\Var \det B_\ell^2)} =\\ \sqrt{(\Var \det B_i^2)^2} = \Var\det B_i^2
  \leq \E \det B_i^4
    \leq (k!)^2 k^3
    \,,
  \end{multline*}
  where the last two inequalities uses $\Var Y \leq \E Y^2$ with $Y = \det B_i^2$ and \eqref{eq:fourthmom} in Theorem~\ref{thm:randmat} with $B= B_i$.
  We obtain
  \begin{multline*}
    \Var X_j = 
    \Cov(X_j, X_j) = 
    \Cov\biggl(\sum_{i=1}^\mu \det B_i^2, \sum_{\ell=1}^\mu \det B_\ell^2\biggr) =\\
    \sum_{i,\ell = 1}^\mu \Cov(\det B_i^2, \det B_\ell^2) \leq 
    \mu^2 \cdot (k!)^2 \cdot k^3\,.
  \end{multline*}
  	
\medskip
  Now consider the value $X$ returned by the algorithm in Step~C5 and observe $X=(X_1+\ldots+X_t)/(k!t)$.
  By linearity of expectation, we have $\E X = t\mu k!/(k!t) = \mu$.
  Recalling that $\var{a \cdot X} = a^2 \cdot \var{X}$ for a random variable $X$ and a scalar $a$, by independence of the $X_j$, we have
  \begin{equation*} \Var X = 
    \Var \biggl(\frac{1}{k!t} \sum_{j=1}^t X_j \biggr) = 
    \frac{1}{(k!t)^2} \sum_{j=1}^t\Var X_j \leq
    \frac{1}{(k!t)^2} t \mu^2 (k!)^2 k^3 =
    \frac{\mu^2  k^3}{t}\,.
  \end{equation*}
Now Chebyshev's inequality gives 
\[
\Pr(| X - \mu| \geq \varepsilon  \mu) \leq
\frac{\Var X}{\varepsilon^2 \mu^2 } \leq
\frac{\mu^2  k^3}{\varepsilon^2  \mu^2 t } =
\frac{1}{100}\,,
\]
which implies the stated bound.

The claim on the running time follows from the discussion in Sec. \ref{sec: rep and comp} and the representation of the input as adjacency lists.
\end{proof}

\subsection{Edge-Variables}
\label{sec: Edge-Variables}

We extend Algorithm~U from the unambiguous case to the case where the number of $k$-paths is bounded by some integer $C$.
The construction uses a coding with formal variables on the edges.
To this end, enumerate $E$ as $\{e_1,\ldots, e_m\}$ and introduce the set $Y$ of formal variables $\{y_1,\ldots, y_m\}$.
Our coding maps $e_j$ to $y_j$.

We then use the following theorem about deterministic polynomial identity testing of sparse polynomials due to Bläser \emph{et al.}:

\begin{thm}[Theorem 2 in \cite{DBLP:journals/ipl/BlaserHLV09}] 
\label{thm: pit}
  Let $f$ be an $m$-variate polynomial of degree $k$ consisting of $C$ distinct monomials with integer coefficients, with the largest appearing coefficient bounded in absolute value by $H$.
  There is a deterministic algorithm which, given an arithmetic circuit of size $s$ representing $f$, 
  decides whether $f$ is identically zero in time $O((mC \log k)^2 s \log H)$
\end{thm}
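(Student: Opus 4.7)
The plan is to reduce the $m$-variate identity test for $f$ to a sequence of univariate identity tests via a Kronecker-style substitution $x_i \mapsto y^{r_i}$ with small positive integer exponents, and then verify each resulting univariate polynomial directly from the given circuit.

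First, I would construct a small, \emph{input-oblivious} family $\mathcal{F} \subseteq \mathbb{Z}_{\ge 0}^m$ of exponent vectors with the following covering property: whenever $f$ satisfies the hypotheses and $f \not\equiv 0$, at least one tuple $(r_1,\dots,r_m) \in \mathcal{F}$ is monomial-separating for $f$, meaning that any two distinct monomials $x^\alpha, x^\beta$ appearing in $f$ satisfy $\sum_i r_i \alpha_i \ne \sum_i r_i \beta_i$. Following the Klivans--Spielman paradigm, $\mathcal{F}$ is naturally parameterized as $r_i = a^i \bmod (p-1)$ for primes $p$ from a short range and bases $a \in \mathbb{Z}_{p-1}^*$; a pair of monomials collides only when the nonzero integer polynomial $\sum_i (\alpha_i - \beta_i) z^i$, of degree at most $m$ and coefficients bounded by $k$ in absolute value, vanishes at $z = a$ modulo $p-1$. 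A union bound over the $\binom{C}{2}$ pairs of monomials of $f$, combined with a standard prime-density estimate, shows that $|\mathcal{F}|$ can be kept to $O((mC\log k)^2)$.

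Second, for each candidate $(r_1,\dots,r_m) \in \mathcal{F}$, symbolically evaluate the circuit on the inputs $y^{r_i}$, maintaining each intermediate result as a truncated polynomial in $y$ of degree at most $D$, where $D$ is a straightforward upper bound on $\deg g$. Because every arithmetic gate handles coefficients of bit-length $O(\log H)$, this evaluation costs $O(s \log H)$ bit operations per candidate, up to factors absorbed into the $(mC\log k)^2$ term. Accept the input as ``identically zero'' iff every candidate yields the zero univariate polynomial; correctness is immediate from the covering property of $\mathcal{F}$ together with the fact that a nonzero univariate polynomial has readily detectable nonzero coefficients.

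The main obstacle is establishing the $(mC \log k)^2$ bound on $|\mathcal{F}|$: the family must be chosen uniformly for every admissible $f$ (the monomial structure of which is unknown to the algorithm), and a naive derandomization of the Klivans--Spielman collision analysis introduces extra logarithmic factors that have to be shaved off by carefully balancing the prime range against the density of ``bad'' bases $a$ on each hyperplane induced by a monomial pair.
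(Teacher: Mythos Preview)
The paper does not prove this theorem. It is quoted verbatim as ``Theorem~2 in \cite{DBLP:journals/ipl/BlaserHLV09}'' and used as a black box in Algorithm~F and in the proof of Theorem~\ref{thm: algorithm F}; no argument for it appears anywhere in the manuscript. So there is nothing in the paper to compare your proposal against.

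That said, your sketch is in the right spirit for how the Bl\"aser--Hardt--Lipton--Vishnoi result is obtained (Kronecker-type substitutions $x_i\mapsto y^{a^i\bmod p}$ over a small family of primes and bases, then univariate testing), but the cost accounting in your second step is too optimistic. After substitution the univariate image has degree $D=\Theta(k\cdot p)$ with $p$ on the order of $mC\log k$, so simulating the size-$s$ circuit while carrying a degree-$D$ dense polynomial at every gate costs at least $\tilde O(sD)$ ring operations per candidate, not $O(s\log H)$. To recover the stated $O((mC\log k)^2\, s\log H)$ bound you need one more idea---either evaluate the univariate image at a single well-chosen integer point and control coefficient growth via the $\log H$ bound (which is what the cited paper does), or work modulo a further small prime so that each gate costs $O(1)$ word operations. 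Without that, your running time is off by a factor of roughly $k\cdot mC\log k$.
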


To use this result, we need to interpret the walk-sum as a small circuit in the variables $Y$ with integer coefficients.
This requires `hard-wiring' every skew product in the exterior algebra by the corresponding small circuit over the integers. 
Algorithm~F contains a detailed description.

\begin{algor}{F}{Detect few $k$-paths}{%
Given directed graph~$G$ and integer~$k$, such that the number of $k$-paths in $G$ is at most~$C$, this algorithm determines if $G$ contains a $k$-path.}
\item[F1] [Set up $\eta$.] 
  Let $F=\ZZ$ and define $\eta\colon V(G)\cup E(G)\rightarrow \Lambda(F^k)[Y]$ by $\eta(v)=\phi(v)$ and $\eta(e_j)=y_j$.
\item[F2] [Circuit $K$ over $\Lambda(F^k)[Y]$.] 
  Let $K$ be the skew arithmetic circuit from~\eqref{eq: walk-sum matrixpower} for computing $\walkpol(G;\eta)$ from its input gates labeled by~$\eta(v)$ for $v\in V(G)$ and $\eta(e)$ for $e\in E(G)$.
\item[F3] [Circuit $L$ over $\ZZ[Y]$.]
  Create a circuit $L$ with inputs from $\ZZ$ and~$Y$ as follows. 
  Every gate~$g$ in $K$ corresponds to $2^k$ gates~$g_I $ for $I\subseteq\{1,\ldots, k\}$ such that $g=\sum_I g_I\cdot \cano I$. 
  When~$g$ is an input gate of the form $g=\phi(v_i)$ the only nonzero gates in $L$ are $g_{\{j\}} =i^j$, an integer. 
  When~$g$ is an input gate of the form $g=y_j$ then the only nonzero gate is the variable  $g_{\emptyset} = y_j$.
  If $g=g'+g''$ then $g_I$ is the addition gate computing $g'_I + g''_I$.
  If $g$ is the skew product~$g'\cdot g''$, where $g''$ is an input gate, then $g_I$ is the output gate of a small subcircuit that computes
  \[
    \sum_{\substack{J\subseteq I\\\abs{J}\leq1}} (-1)^{\sgn(I\setminus J,J)} g'_{I\setminus J} g''_J\,.
  \]
  (This is \eqref{eq: alternating subset convolution}, noting $g''_J=0$ for $|J|>1$.)
  If $g$ is the output gate of $K$ then $g_{\{1,\ldots,k\}}$ is the output gate of $L$.
\item[F4] [Decide.]
  Use the algorithm from the above theorem to determine if $L$ computes the zero polynomial.
  Return that answer.
\end{algor}

We are ready to establish Theorem~\ref{thm decide sub} for the case where the pattern graph $H$ is a path.

\begin{thm} \label{thm: algorithm F}
Algorithm $F$ is a deterministic algorithm for the $k$-path problem when there are at most $C \in \NN$ of them, and runs in time $C^2 2^k n^{O(1)}$.
\end{thm}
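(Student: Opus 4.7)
The plan is to view Step~F3 as translating the exterior-algebra circuit $K$ into an arithmetic circuit $L$ over $\ZZ[Y]$ computing the coefficient of $\cano{[k]}$ in $\walkpol(G;\eta)$, identify that coefficient as a sparse polynomial whose nonvanishing characterizes the presence of a $k$-path, and then invoke Theorem~\ref{thm: pit}.

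The first step is to verify by structural induction on $K$ that for every gate $g$ of $K$, computing an element $g=\sum_I g_I\,\cano{I}\in\Lambda(F^k)[Y]$, each gate $g_I$ in $L$ evaluates to the polynomial $g_I\in\ZZ[Y]$. Input and addition gates are handled by inspection. For skew products, the key observation (already exploited below~\eqref{eq: walk-sum matrixpower}) is that the evaluation order for $\walkpol$ can be chosen so that the right factor $g''$ of every multiplication is an input gate, hence either $\phi(v_i)\in F^k$ or $y_j\in\Lambda^0(F^k)[Y]$; in both cases $g''_J=0$ whenever $|J|>1$, so the general formula~\eqref{eq: alternating subset convolution} truncates to the expression used in F3.

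Combining this with Lemma~\ref{lem: main}, \eqref{eq:walkpol = pathpol}, and Lemma~\ref{lem: vandermonde extensor} identifies the polynomial computed by the output gate of~$L$ as
\[
  p(Y)=\sum_{P\in\paths}(\det\Phi_P)\prod_{e\in P}y_e\,.
\]
A directed path is uniquely determined by its edge set, so the monomials $\prod_{e\in P}y_e$ are pairwise distinct, and by~\eqref{eq:Vandermonde determinant} every coefficient $\det\Phi_P$ is a nonzero integer. Hence $p$ has exactly $|\paths|\le C$ nonzero monomials and vanishes identically if and only if $G$ contains no $k$-path, which yields the correctness of Step~F4.

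For the running time, I would bound the ingredients of Theorem~\ref{thm: pit} as follows: $K$ has $O(k(n+m))$ gates and each expands into $2^k$ gates $g_I$ in $L$, with every skew product giving rise to a subcircuit of size $O(k)$ thanks to the $|J|\le 1$ restriction; hence $s=|L|=2^k\cdot n^{O(1)}$. The polynomial $p$ has $m$ variables, degree at most $k$, at most $C$ nonzero monomials, and coefficients bounded by $n^{O(k^2)}$ via~\eqref{eq:Vandermonde determinant}, so $\log H=\poly(k,\log n)$. Plugging into $O((mC\log k)^2\cdot s\cdot\log H)$ yields the claimed bound of $C^2\cdot 2^k\cdot n^{O(1)}$. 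The main technical point is precisely the bookkeeping underlying F3: without the property that every multiplication in $K$ has an input gate on the right, a single skew product would incur the full $O(2^k)$-term convolution~\eqref{eq: alternating subset convolution} and inflate the size of $L$ to $4^k\cdot n^{O(1)}$, which would destroy the target running time.
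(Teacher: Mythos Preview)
Your proposal is correct and follows essentially the same route as the paper's proof: identify the output of~$L$ as the polynomial $\sum_{P\in\paths}(\det\Phi_P)\prod_{e\in P}y_e$, argue that its monomials are in bijection with the $k$-paths (distinct edge sets, nonzero Vandermonde determinants), and then apply Theorem~\ref{thm: pit} with the size bound $|L|=O(2^k(n+m)\poly(k))$ coming from the skew structure of~$K$. You are slightly more explicit than the paper in two places---the structural induction justifying~F3 and the bound $|\det\Phi_P|\le n^{\binom{k}{2}}$ needed for the $\log H$ factor---but these are expository refinements, not a different argument.
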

\begin{proof}
  Let $G$ be a graph with at most $C$ paths of length $k$.
  First, we argue for correctness of Algorithm $F$. 
  From \eqref{eq: walk-sum def}, it follows that the circuit $K$ outputs 
  \[
    f(G;\eta) = 
    \sum_{P \in \paths} \left( \prod_{e_i \in P} y_i \right) \cdot \det(\Phi_P) \cdot \cano{[k]} \in \Lambda(F^k)[Y] \,,
  \]
  where $\Phi_P$ is the Vandermonde matrix associated with the vertices on $P$ from \eqref{eq: Vandermonde matrix}.
  By the construction of $L$, the output gate of $L$ computes the polynomial
  \[
    \sum_{P \in \paths} \left( \prod_{e_i \in P} y_i \right) \cdot \det(\Phi_P) \in F[Y] \,,
  \]
  which is just an $m$-variate, multilinear polynomial over the integers.
  Note that, by construction, all the appearing determinants are non-zero.
  Since all our graphs are directed, any path is already uniquely determined by the unordered set of edges that appear on it.
  It follows that the monomials belonging to the distinct $k$-paths in a graph, each formed as the product of the edge variables corresponding to the edges on the path, are linearly independent.
  Therefore, the monomials of the polynomial in $Y$ computed by $L$ are in bijective correspondence with the $k$-paths in $G$. 
  Theorem \ref{thm: pit} thus yields the correct answer. 

  As for the running time, we see that every gate in $K$ is replaced by at most $2^k(k+1)$ new gates to produce $L$.
  Since $K$ was of size $O(k(n+m))$, the resulting circuit $L$ is of size $O(2^k (n+m)\poly (k))$ and can be constructed in this time. 
  Since, as noted, the monomials in the polynomial computed by $L$ are in bijection with the $k$-paths in $G$, there are at most $C$ many.
  The application of Theorem \ref{thm: pit} is thus within the claimed running time bound. 
\end{proof}

\section{Connection to Previous Work}
\label{sec: Previous}
In this section, we show how our approach using exterior algebras specializes to the group algebra approach of Koutis ~\cite{DBLP:conf/icalp/Koutis08} when the ground field has characteristic two.
We also argue that the combinatorial approach of Björklund \emph{et al.}~\cite{DBLP:journals/jcss/BjorklundHKK17} using \emph{labeled walks} can be seen as an evaluation over an exterior algebra.
Moreover, we show how color-coding~\cite{DBLP:journals/jacm/AlonYZ95} arises as a special case,
and present the recent approach of representative paths due to Fomin et al.~\cite{DBLP:journals/jacm/FominLPS16} in the language of exterior algebra.

\subsection{Random Edge-Weights}
\label{sec: Random Edge-Weights}

We begin with a randomized algorithm for detecting  a $k$-path in a directed graph, recovering Koutis's and Williams's result.

\begin{thm}[\cite{DBLP:conf/icalp/Koutis08, DBLP:journals/ipl/Williams09}]
  There is a randomized algorithm for the $k$-path problem with running time $2^k(n+m)\poly(k)$.
\end{thm}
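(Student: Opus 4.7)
The plan is to instantiate the extensor-coding~$\rho$ from the table in Section~\ref{sec: Extensor-coding}: set $F=\QQ$, let $\rho(v_i)=\phi(v_i)$ be the Vandermonde extensor-coding from Lemma~\ref{lem: vandermonde extensor}, and, independently for each edge~$e\in E(G)$, draw $\rho(e)$ uniformly at random from $\{1,\ldots,100k\}$. The algorithm then computes $f(G;\rho)$ over $\Lambda(F^k)$ via~\eqref{eq: walk-sum matrixpower} and returns ``yes'' iff this extensor is nonzero.

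For correctness, Lemma~\ref{lem: main} eliminates all non-path contributions, so
\[
  f(G;\rho) \;=\; \sum_{P\in\paths}\Bigl(\prod_{e\in P}\rho(e)\Bigr)\,\det(\Phi_P)\cdot \cano{[k]}.
\]
View the coefficient of $\cano{[k]}$ as the random evaluation of a polynomial in formal edge variables $\{y_e\}_{e\in E}$. The critical combinatorial observation is that, in a directed graph, a $k$-path is uniquely reconstructible from its edge set (each internal vertex of the path has exactly one incoming and one outgoing path-edge), so distinct paths contribute distinct multilinear monomials $\prod_{e\in P} y_e$ and these monomials are linearly independent. Each coefficient $\det(\Phi_P)$ is nonzero by Lemma~\ref{lem: vandermonde extensor}, so the polynomial is not identically zero whenever $\paths\neq\emptyset$. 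The polynomial has degree $k-1$, hence by the Schwartz--Zippel lemma a random evaluation in $\{1,\ldots,100k\}^m$ vanishes with probability at most $(k-1)/(100k)<1/100$, giving the desired $99\%$ success probability. If $\paths=\emptyset$ the walk-sum is identically zero and the algorithm never errs.

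The running-time analysis is identical to that of Algorithm~U: by Section~\ref{sec: rep and comp}, the walk-sum is computed using $O(k(n+m))$ operations in $\Lambda(F^k)$, each of which reduces to $O(k\cdot 2^k)$ field operations, totalling $2^k(n+m)\poly(k)$. Since $\rho$ only produces integers of magnitude at most $\max(n,100k)^{O(k)}$, each field element fits into $\poly(k)\log n$ bits, so the word-RAM overhead is absorbed into $\poly(k)$.

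The main conceptual obstacle, and the reason Algorithm~U by itself is not already enough, is the cancellation issue flagged in Section~\ref{sec: Walk Extensors}: even though each individual $\det(\Phi_P)$ is nonzero, different paths can cancel each other over $\QQ$. The edge variables are exactly what is needed to separate the paths: because distinct paths use distinct edge sets, multiplying by $\prod_{e\in P}y_e$ makes the path contributions formally independent, reducing the detection problem to polynomial identity testing in a polynomial we never explicitly write down, which is then handled by Schwartz--Zippel. Everything else is bookkeeping.
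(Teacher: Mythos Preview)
Your proof is correct and follows essentially the same approach as the paper: define $\rho$ by Vandermonde vectors on vertices and uniformly random scalars in $\{1,\dots,100k\}$ on edges, use Lemma~\ref{lem: main} to restrict to paths, observe that in a directed graph a path is determined by its edge set so the edge-variable polynomial is nonzero whenever $\paths\neq\emptyset$, and finish with Schwartz--Zippel and the running-time analysis of Algorithm~U. You are in fact slightly more precise than the paper in noting the degree is $k-1$ rather than~$k$ and in spelling out the bit-length bound.
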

\begin{proof}
  The algorithm is the baseline Algorithm~U, but with the following step replacing U1:
  \begin{description}[noitemsep,labelindent=0pt,labelwidth=2.2em,labelsep=0pt,leftmargin=!]%
    \item[U1\boldmath$'$]
      Enumerate the edges as $E=\{e_1,\ldots, e_m\}$ and choose $m$ integers $r_1,\ldots, r_m\in\{1,\ldots,100k\}$ uniformly at random.
	Define the extensor-coding $\rho$ on $V(G)\cup E(G)$ by 
	\[ v_i\mapsto \phi(v_i), \qquad e_j\mapsto r_j\,.\]
  \end{description}
  The rest is the same, with $\rho$ instead of $\phi$. 

  The correctness argument is a routine application of polynomial identity testing:
  The expression $f(G;\rho)$ can be understood as the result of the following random process.
  Introduce a formal `edge' variable $y_e$ for each $e\in E$ and consider the expression
  \begin{equation}\label{eq: walk poly}
    \sum_{w_1\cdots w_k\in \mathscr P} y_{w_1w_2}\cdots y_{w_{k-1}w_k}\cdot \phi(w_1\ldots w_k)
  \end{equation}
  as a polynomial of degree $k$ in the variables $y_{e_1},\ldots, y_{e_m}$.
  In a directed graph, every path is uniquely determined by its set of (directed) edges.
  Thus, if $\mathscr P \neq \emptyset$ then \eqref{eq: walk poly} is a nonzero polynomial.
  The walk-sum $f(G;\rho)$ is an evaluation of this polynomial at a random point $y_{e_1}=r_1,\ldots,y_{e_m}=r_m$.
  By the DeMillo--Lipton--Schwartz--Zippel Lemma, $f(G;\rho)$ is nonzero with probability $\tfrac{1}{100}$.
\end{proof}

\subsection{Group Algebras}
\label{sec: Group Algebras}
Let $R$ be a ring and let $M$ be a monoid with multiplication $\ast$.
We denote with $R[M]$ the \emph{monoid algebra of $M$ over $R$}.
If $M$ is actually a group, we call $R[M]$ the \emph{group algebra of $M$ over~$R$}.
That is, $R[M]$ is the set of all finite formal linear combinations of elements from~$M$ with coefficients in $R$.
An element of $R[M]$ is thus of the form $\sum_{m \in M} r_m \cdot m$, with only finitely many of the $r_m\in R$ non-zero.
Elements from $R[M]$ admit a natural point-wise addition and scalar multiplication.
Multiplication in $R[M]$, written $\bullet$, 
is defined by the distributive law,
\[
\left(\sum_{m\in M} c_m \cdot m\right)\bullet \left(\sum_{m \in M} d_m \cdot m\right)
=
\left(\sum_{g,h \in G} (c_g \cdot d_{h}) \cdot (g\ast h) \right) \,,
\]
which is again an element of $R[M]$.

\medskip
As the name suggests, the monoid algebra $R[M]$ is indeed an $R$-algebra, and is of dimension~$|M|$.
Usually, multiplication and addition in the ground ring $R$, the monoid $M$, and the group algebra~$R[M]$ are all denoted by $\cdot$ and $+$.

\begin{prop}
Let $F$ be of characteristic two and $F^k$ the free vector space of dimension $k$ with basis $\{\cano 1,\ldots,\cano k\}$.
Then, the group algebra $F[\ZZ_2^k]$ is isomorphic to $\extalg{F^k}$.
\end{prop}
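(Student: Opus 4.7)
The plan is to construct an explicit algebra isomorphism $\Phi\colon \extalg{F^k} \to F[\ZZ_2^k]$ by specifying it on generators. Let $g_1,\dots,g_k$ denote a choice of generators of $\ZZ_2^k$, so that every group element is uniquely of the form $g_S = \prod_{i\in S} g_i$ for some $S\subseteq\{1,\ldots,k\}$, and the family $\{g_S : S\subseteq\{1,\dots,k\}\}$ is an $F$-basis of $F[\ZZ_2^k]$.

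First, I would verify that the elements $1+g_i \in F[\ZZ_2^k]$ satisfy the defining relations of the exterior algebra on $F^k$. Because $F$ has characteristic two and $g_i^2 = 1$, we compute $(1+g_i)^2 = 1 + 2g_i + g_i^2 = 1+1 = 0$, while $(1+g_i)(1+g_j) = (1+g_j)(1+g_i)$ follows from $\ZZ_2^k$ being abelian; in characteristic two this commutativity coincides with the anticommutativity $xy = -yx$ demanded of the images of basis vectors under the wedge product. By the universal property of the exterior algebra, there is therefore a unique $F$-algebra homomorphism $\Phi\colon \extalg{F^k}\to F[\ZZ_2^k]$ with $\Phi(\cano i) = 1+g_i$ for $i=1,\dots,k$.

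Next, I would show $\Phi$ is a bijection by computing its action on the standard basis $\{\cano I\}_{I\subseteq\{1,\dots,k\}}$. Multiplying out yields
\[
  \Phi(\cano I) \;=\; \prod_{i\in I}(1+g_i) \;=\; \sum_{S\subseteq I} g_S,
\]
which expresses $\Phi(\cano I)$ in the basis $\{g_S\}$ via the zeta matrix of the subset order. This matrix is triangular with ones on the diagonal and is thus invertible over any field (its inverse is the M\"obius transform of the subset lattice). Hence $\Phi$ carries a basis to a basis and is an $F$-algebra isomorphism.

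The only conceptually delicate step is the appeal to the universal property. As a fallback, one can define $\Phi$ directly on the basis via the zeta-sum formula above and verify multiplicativity $\Phi(x\wedge y)=\Phi(x)\Phi(y)$ by bilinearity, reducing to the case of basis vectors $\cano I$ and $\cano J$ and using $g_Sg_T = g_{S\triangle T}$ together with the fact that $2=0$ in $F$ to collapse the cross-terms coming from overlapping indices. I do not anticipate any substantive obstacle: the content of the proposition is essentially that the zeta/M\"obius change of basis conjugates the group multiplication on $\ZZ_2^k$ into the wedge multiplication exactly when $1+1=0$.
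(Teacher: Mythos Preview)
Your proposal is correct and uses exactly the same map as the paper, namely $\cano i \mapsto 1+g_i$; the paper's proof consists of a single sentence asserting that this map induces an isomorphism, so your version simply supplies the verification (the relation check and the zeta-matrix triangularity argument) that the paper omits.
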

\begin{proof}
  We denote with $\cano i \in \ZZ_2^k$ for $i\in\set{1,\dots,k}$ the $i$th unit vector.
  The morphism induced by mapping $\extalg{F^k} \ni \cano i \mapsto (1 + \cano i) \in F[\ZZ_2^k]$ is an isomorphism.
\end{proof}

\begin{rem}
The previous proposition shows that over fields of characteristic two, 
our exterior algebras specialize exactly to the group algebras used by Koutis and Williams~\cite{DBLP:conf/icalp/Koutis08,DBLP:journals/ipl/Williams09},
and therefore, the approach of using random edge-weights in the coding $\rho$ from Section \ref{sec: Random Edge-Weights} specializes to Williams' algorithm~\cite{DBLP:journals/ipl/Williams09} over fields of characteristic two and sufficient size, albeit with deterministically chosen vectors at the vertices, which of course also could be done randomly without changing anything about the result.
\end{rem}

\subsubsection*{Exterior Algebras as Quotients of Monoid Algebras}
We have seen that the above group algebras are exterior algebras in characteristic two,
and now consider the other direction.
For $k \in \NN$, consider the free monoid $E^\ast$ over the generators $E := \{\cano 1,\ldots,\cano k,\mu,\theta\}$, and impose these relations on $E^\ast$:
The element $\theta$ is a zero, \emph{i.e.}, $\theta x = x\theta = \theta$ for all $x \in E^\ast$,
and $\mu$ central, \emph{i.e.}, $\mu x = x \mu$ for all $x \in E^\ast$, 
and we shall have for all $i$ that $\cano i^2 = \theta$. 
We further demand that $\cano i \cano j = \mu \cano j \cano i$ and $\mu^2 = 1_E$ hold.
Let $S$ be the quotient of $E^\ast$ by these relations, and consider $F[S]$. 
Let $I_S$ be the ideal generated by $\{\theta, \mu + 1\}$.
Naturally in $F[S]/I_S$, we have $\theta = 0$ and $\mu = -1$, and hence $\cano i^2 = 0$ and $\cano i \cano j = -\cano j \cano i$.
Thus, $F[S]/I_S$ is \emph{precisely} the exterior algebra over $F^k$.
Hence, not only are the above-mentioned group algebras a special case of an exterior algebra, 
but any exterior algebra arises as a quotient of some monoid algebra.
Note that this representation of exterior algebras (and, more generally, Clifford algebras) as quotients of certain monoid (or group) algebras is folklore.

\subsection{Labeled Walks}
\label{sec: Labeled Walks}
The main goal of the labeled walk approach of Björklund \emph{et al.}~\cite{DBLP:journals/jcss/BjorklundHKK17}  was to give an algorithm for the \emph{undirected} case running in time $1.66^k\cdot \poly(n)$.
This is achieved by a method called \emph{narrow sieves}, which involves reducing the number of so-called labels used on the graph.
The underlying walk labeling idea itself, however, remains valid also on directed graphs and when keeping \emph{all} labels, and then reproduces the randomized $2^k\cdot \poly(n)$ runtime bound of Williams~\cite{DBLP:journals/ipl/Williams09}.
This is nicely laid out in the textbook by Cygan \emph{et al.}~\cite[Section 10.4]{DBLP:books/sp/CyganFKLMPPS15},
and the following presentation is guided by theirs. 

Consider now $\lambda,$ the extensor-coding for labeled walks. 
That is, let $y_e$ be variables associated with each directed edge $e \in E(G)$,
and let a vector of variables $(x_i^{(1)},\ldots,x_i^{(k)})^T$ be associated with each vertex $v_i\in V(G)$.
The superscript index is referred to as the \emph{label} of a vertex in a walk.
Consider the following polynomial in the $y_e$ and $x^{(j)}_i$:
\begin{align}
  P(x,y) = \sum_{w_1\cdots w_k\in\walks} \sum_{\ell \in S_k} \prod_{i=1}^k y_{w_iw_{i+1}} \prod_{i=1}^k x^{\ell(i)}_i.
\end{align}
The crucial insight is that over characteristic $2$, the sum can be restricted to paths instead of walks:
\begin{align}
  P(x,y) = \sum_{w_1\cdots w_k\in\paths} \sum_{\ell \in S_k} \prod_{i=1}^k y_{w_iw_{i+1}} \prod_{i=1}^k x^{\ell(i)}_i.
\end{align}
In this form, the statement is true only over characteristic two.
However, even over characteristic~$0$, a similar statement can be made when taking into account the sign of the permutation $\ell$.
We may now observe that the inner sum is just a determinant of a suitably chosen matrix,
namely the $k\times k$ matrix $X(w_1\cdots w_k) := (x^{(i)}_{w_j})_{i,j}$ indexed by pairs of numbers and vertices,
and we can write
\begin{align*}
  P(x,y) = \sum_{w_1\cdots w_k\in\paths}  \prod_{i=1}^k y_{w_iw_{i+1}} \det(X(w_1\cdots w_k))\,.
\end{align*}
Here, the Matrix $X(P)$ for a path $P \in \paths$ plays precisely the r\^{o}le that the matrix $\Phi_P$ played in the proof of Theorem \ref{thm: algorithm F}, just that this time, it carries variable entries.

It is now easy to see that this is once again just the evaluation of the circuit computing the $k$-walk extensor over characteristic two by the property of the wedge product expressed in Equation \eqref{eq: wedge as det}.
In short, the walk-sum $f(G;\lambda)$ for the extensor-coding $\lambda$ achieves
\[
f(G;\lambda) = P(x,y)
\]
whenever $F$ is of characteristic two.

This gives the connection between $\lambda$ and $\rho$ from Section \ref{sec: Random Edge-Weights} over characteristic two, and by the remark in Section \ref{sec: Group Algebras}, also the connection between the group-algebra approach, identifying the three techniques as one. 

\subsection{Color-coding}
\label{sec: color-coding}
Let us see how the color-coding-technique by Alon, Yuster, and Zwick~\cite{DBLP:journals/jacm/AlonYZ95} arises as a special case of extensor-coding.
Consider a coding with basis vectors of $F^k$ taken uniformly and independently,
\[\chi(v) \in\{e_1,\ldots, e_k\}\,.\]
Readers familiar with \cite{DBLP:journals/jacm/AlonYZ95} are encouraged to think of the basis vectors as $k$ colors.
Note that if $P=w_1\cdots w_k$ is a path then its walk extensor $\chi(P)$ vanishes exactly if the $k\times k$-matrix whose columns are the random unit vectors $\chi(w_1)\cdots\chi(w_k)$ is singular. 
Thus, \[
\Pr (\chi(P) = 0) = \frac{k!}{k^k}\leq e^{-k} \,.\]
We lift $\chi$ to $\overline\chi\colon V(G)\rightarrow \Lambda^2(F^{2k})$, ensuring $\overline\chi(P) = \{0\cdot e_{[2k]}, 1\cdot e_{[2k]}\}$, to avoid cancellation.

Let us write $Z(F^k)$ for the subalgebra of $\Lambda(F^{2k})$ generated by $\{\overline{\cano 1},\ldots ,\overline{\cano k}\}$, called the \emph{Zeon-algebra}. It already made an appearance in graph algorithms in the work of Schott and Staples~\cite{DBLP:journals/cma/SchottS11}.

\begin{lem} 
\label{lem: zeons}
$Z(F^k)$ is commutative and of dimension $2^k$, and its generators $\overline{\cano i}$ square to zero.
Furthermore, addition and multiplication can be performed in $2^k\cdot \poly(n)$ field operations.
\end{lem}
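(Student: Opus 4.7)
The plan is to verify the three structural claims first (commutativity, squaring to zero, and dimension exactly $2^k$) and then translate them into the stated complexity bound for addition and multiplication.

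For the algebraic claims, the key observation is that each generator $\overline{\cano i} = \cano i \wedge \cano{i+k}$ lies in $\Lambda^2(F^{2k})$. The squaring relation $\overline{\cano i}^{\,2} = 0$ is immediate from property \ref{vector alternating}: the expansion $\cano i \wedge \cano{i+k} \wedge \cano i \wedge \cano{i+k}$ has a repeated $\cano i$. For commutativity, I will invoke the standard grading identity $x \wedge y = (-1)^{pq}\, y \wedge x$ for $x \in \Lambda^p(F^{2k})$ and $y \in \Lambda^q(F^{2k})$; with $p = q = 2$ the sign is $+1$, so any two generators commute. Since $Z(F^k)$ is generated multiplicatively by these commuting elements, commutativity passes to the whole subalgebra by bilinearity.

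For the dimension, the commuting-plus-nilpotent relations immediately show $Z(F^k)$ is spanned by the $2^k$ products $\overline{\cano I} := \prod_{i \in I} \overline{\cano i}$ indexed by $I \subseteq \{1,\dots,k\}$. Linear independence is the point where the embedding into $\Lambda(F^{2k})$ pays off: writing $I = \{i_1 < \dots < i_r\}$ and reordering using anticommutativity of single vectors, one gets
\[
  \overline{\cano I} = \pm\, \cano{i_1} \wedge \cano{i_1+k} \wedge \cdots \wedge \cano{i_r} \wedge \cano{i_r+k} = \pm\, \cano{I \cup (I+k)},
\]
a signed basis extensor of $\Lambda(F^{2k})$. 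Distinct $I \subseteq \{1,\dots,k\}$ yield distinct subsets $I \cup (I+k) \subseteq \{1,\dots,2k\}$ and hence different basis extensors, which are linearly independent in $\Lambda(F^{2k})$. So the spanning set is a basis.

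For the computational claim, I will represent an element of $Z(F^k)$ by its $2^k$ coefficients on the basis $\{\overline{\cano I}\}_{I \subseteq [k]}$. Addition is then coordinatewise in $2^k$ field operations. For multiplication, commutativity and $\overline{\cano i}^{\,2} = 0$ reduce the product to the disjoint-union convolution
\[
  \Bigl(\sum_{I} a_I\, \overline{\cano I}\Bigr) \cdot \Bigl(\sum_{J} b_J\, \overline{\cano J}\Bigr) = \sum_{K \subseteq [k]} \Bigl(\sum_{\substack{I \sqcup J = K}} a_I b_J\Bigr)\, \overline{\cano K},
\]
with no signs to track. A direct evaluation costs $3^k \poly(k)$ operations, while the standard subset-convolution algorithm achieves the claimed $2^k \poly(k)$ bound. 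The main (minor) obstacle is just pinning down the linear independence cleanly; once the identification $\overline{\cano I} = \pm\, \cano{I \cup (I+k)}$ is in hand, everything else is bookkeeping.
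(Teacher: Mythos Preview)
Your proof is correct and follows essentially the same route as the paper: squaring to zero from the definition, commutativity of the generators (you invoke the general grading sign $(-1)^{pq}$ with $p=q=2$, whereas the paper does the explicit four-swap calculation on $\cano i \wedge \cano{i+k} \wedge \cano j \wedge \cano{j+k}$), the basis $\{\overline{\cano I}\}_{I\subseteq[k]}$ via the identification with $\pm\,\cano{I\cup(I+k)}$, and fast subset convolution for the $2^k\cdot\poly(k)$ multiplication bound. Your treatment of linear independence is in fact more explicit than the paper's, which simply asserts it is ``readily verified''.
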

\begin{proof}
Directly from the definition of the exterior algebra, $\overline{\cano i} \land \overline{\cano i} = 0$.
Furthermore,
\begin{multline*}
%&(\cano i \otimes \cano i) \sqtimes (\cano j \otimes \cano j) = 
%(\cano i \land \cano j \otimes \cano i \land \cano j) = -(\cano j \land \cano i \otimes \cano i \land \cano j) = \\
%&-(-(\cano j \land \cano i \otimes \cano j \land \cano i)) = (\cano j \land \cano i \otimes \cano j \land \cano i) = 
%(\cano j \otimes \cano j) \sqtimes (\cano i \otimes \cano i) \,,
\overline{\cano i} \wedge \overline{\cano j} =
\cano{i} \wedge \cano{i+k} \wedge \cano{j} \wedge \cano{j+k} = -\cano{i} \wedge \cano{j} \wedge \cano{i+k} \wedge \cano{j+k} = 
 \cano{j} \wedge \cano{i} \wedge\cano{i+k} \wedge\cano{j+k} =\\ -\cano{j} \wedge \cano{i}\wedge\cano{j+k} \wedge\cano{i+k} =
 \cano{j} \wedge\cano{j+k} \wedge \cano{i} \wedge\cano{i+k}  =  \overline{\cano{j}} \wedge \overline{{\cano i}}\,,
\end{multline*}
and therefore $Z(F^k)$ is commutative.
It is readily verified that the elements $\overline{\cano I}$ with $I \subseteq [k]$ form a basis of $Z(F^k)$.
By renaming $\overline{\cano i}$ as, say, $X_i$, we recognize $Z(F^k)$ as the $F$-algebra of multilinear polynomials in variables $X_i, 1 \leq i \leq k$ with the relations $X_i^2 = 0$ for all $1 \leq i \leq k$.
Addition is performed component-wise and can be done trivially in the required bound.
By standard methods, such as Kronecker substitution and Schönhage--Strassen-multiplication (see, \emph{e.g.}, \cite{DBLP:books/daglib/0031325}), or more directly, fast subset convolution \cite{DBLP:conf/stoc/BjorklundHKK07}, multiplication of multilinear polynomials modulo $X_i^2$ can be performed in $Z(F^k)$ in the required time bound.
\end{proof}

Thus, we can evaluate the walk-sum $f(G;\overline\chi)$ in time $2^k(n+m)\poly(k)$.
Repeating the algorithm $\mathrm e^k$ times we arrive at time $(2\mathrm e)^k(n+m)\poly(k)$, as in \cite{DBLP:journals/jacm/AlonYZ95}.

\medskip
We apply these constructions in a similar setting in Appendix \ref{sec: mult det}.

\subsection{Representative Paths}

The idea to represent the $\Omega(n^kk!)$ many $k$-paths in $G$ by a family of only $f(k)\poly(n)$ many combinatorial objects goes back to the original $k$-path algorithm of Monien~\cite{MONIEN1985239}.
Recent representative-sets algorithms \cite{DBLP:journals/jacm/FominLPS16,Zehavi14}, including the fastest deterministic $k$-path algorithms, follow this approach, maintaining representative families of subsets of a linear matroid.

One of those constructions is inspired by Lovász's proof of the Two-Families theorem, which is originally expressed in exterior algebra \cite{lovasz77}.
In fact, as pointed out by Marx \cite[Proof of Lemma 4.2]{DBLP:journals/tcs/Marx09}, the column vectors in the matroid representation are exactly Vandermonde extensors.
Fomin \emph{et al.} \cite[Theorem 1]{DBLP:journals/jacm/FominLPS16} develop this idea in detail for $k$-path, but their presentation abandons the exterior algebra and continues in the framework of uniform matroids.

We can give a relatively short and complete presentation.
This has only expository value; the time bounds in this construction are not competitive.

\medskip
For a set $\mathscr R$ of walks and an extensor coding $\xi$ to $\Lambda(F^k)$ we define the \emph{extensor span} $\langle \mathscr R\rangle$ via
\[\langle\mathscr R\rangle = \operatorname{span}\paren[\Big]{\setc{\xi(R)}{R\in \mathscr R}}\,,\]
that is, $\langle \mathscr R\rangle$ is the set of linear combinations over $F$ of extensors viewed as $2^k$-dimensional vectors.
A~set $\mathscr R$ of walks \emph{represents} another set $\mathscr P$ of walks if $\xi(P)\in\langle\mathscr R\rangle$ for all $P\in \mathscr P$.
For $p\in\{1,\ldots, k\}$, we write $\mathscr P_v^p$ for the set of length-$p$ paths of $G$ that end in~$v$.
We will use the Vandermonde coding $\phi$ for $\xi$.

\begin{algor}{R}{Detect $k$-paths using representative paths.}{%
  Given directed graph $G$ and integer $k$, this algorithm determines if $G$ contains a $k$-path.
  For each $p\in\set{1,\ldots, k}$ and $v\in V(G)$, the algorithm computes a set $\mathscr R_v^p$ of paths such that
  \begin{equation}\label{eq: inv rep 1}
    \phi(P)\in \langle \mathscr R_v^p\rangle\qquad 
      \text{for each $P\in \mathscr P_v^p$}
  \end{equation}
  and 
  \begin{equation}\label{eq: inv size 1}
    \abs{\mathscr R_v^p} \leq 2^k\,.
  \end{equation}
}
  \item [R1] (First round.)
    Let $p=1$.
    For each $v\in V(G)$, set $\mathscr R^1_v= \set{v}$, the singleton set of $1$-paths.
  \item [R2] (Construct many representative walks.)
    For each $v\in V(G)$, set
    \begin{equation}\label{eq: Q def}
      \mathscr Q_v^{p+1}  = \setc[\big]{Rv}{R\in \mathscr R_u^p\text{ and } uv\in E(G)}\,.
    \end{equation}
  \item [R3] (Remove redundant walks.)
    For each $v\in V(G)$, set $\mathscr R_v^{p+1} = \emptyset$.
    For each $Q\in \mathscr Q_v^{p+1}$ in arbitrary order, if 
    \( \phi(Q)\notin \langle \mathscr R_v^{p+1}\rangle \,,\)
    then add $Q$ to $\mathscr R_v^{p+1}$.
    [Now $\phi(Q)\in \langle\mathscr R_v^{p+1}\rangle$.]
  \item [R4] (Done?)
    If $p+1<k$ then increment $p$ and go to R3.
    Otherwise return `true' if and only if $\mathscr R_v^k\neq \emptyset$ holds for some $v\in V(G)$.
\end{algor}

\begin{prop}[Theorem~1 in \cite{DBLP:journals/jacm/FominLPS16}]
  Algorithm R is a deterministic algorithm for $k$-path with running time $\exp(O(k))\poly(n)$.
\end{prop}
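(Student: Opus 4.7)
The plan is to verify both invariants stated in the algorithm by induction on $p$, then derive correctness of the decision and the running-time bound as corollaries. Invariant \eqref{eq: inv size 1} is essentially automatic: step R3 only admits a candidate $Q$ when $\phi(Q)$ is linearly independent of the extensors already collected, and these extensors live in $\Lambda(F^k)$, a vector space of dimension $2^k$, so at most $2^k$ can be collected.

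The heart of the proof is invariant \eqref{eq: inv rep 1}, which I would prove by induction on $p$. The base case $p=1$ is immediate since $\mathscr P_v^1=\mathscr R_v^1=\{v\}$. For the inductive step, take any path $P\in\mathscr P_v^{p+1}$ and decompose it as $P=P'v$ where $P'\in\mathscr P_u^p$ for the predecessor $u$ of $v$ on $P$. By the inductive hypothesis, $\phi(P')=\sum_i c_i\phi(R_i)$ for some $R_i\in\mathscr R_u^p$ and $c_i\in F$. Wedging with $\phi(v)$ on the right gives $\phi(P)=\sum_i c_i\,\phi(R_iv)$, and each $R_iv$ belongs to $\mathscr Q_v^{p+1}$ by~\eqref{eq: Q def}. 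Hence $\phi(P)\in\langle\mathscr Q_v^{p+1}\rangle$. The pruning in R3 is designed precisely to preserve spans: a candidate $Q$ is discarded only if $\phi(Q)$ is already in $\langle\mathscr R_v^{p+1}\rangle$ at that moment, so $\langle\mathscr R_v^{p+1}\rangle=\langle\mathscr Q_v^{p+1}\rangle$ at the end of the loop, which yields $\phi(P)\in\langle\mathscr R_v^{p+1}\rangle$.

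For the correctness of the final output, I would use Lemma~\ref{lem: vandermonde extensor}: if $P$ is a genuine $k$-path, then $\phi(P)=d\cdot\cano{[k]}$ with $d\neq 0$, so $\phi(P)\neq 0$. Combined with the invariant for $p=k$, the existence of a $k$-path ending at $v$ forces $\mathscr R_v^k\neq\emptyset$. Conversely, Lemma~\ref{lem: main} applied to the decomposable extensor-coding $\phi$ ensures that any walk in $\mathscr R_v^k$ with a repeated vertex would have $\phi$-image $0$ and hence fail the admission test in R3, so every walk retained in $\mathscr R_v^k$ is in fact a $k$-path. Thus the final test $\mathscr R_v^k\neq\emptyset$ answers the $k$-path question correctly.

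For the running time, at round $p\to p+1$ one builds $\mathscr Q_v^{p+1}$ of size at most $2^k\cdot\deg^-(v)$, so $\sum_v|\mathscr Q_v^{p+1}|\le 2^k(n+m)$. The dominant cost is the linear-independence test in R3, which amounts to maintaining a basis of a subspace of $\Lambda(F^k)\cong F^{2^k}$ and testing membership; by standard Gaussian elimination this costs $\poly(2^k)$ per candidate. Over all $k$ rounds, all vertices and all candidates this totals $\exp(O(k))\poly(n)$. The main conceptual obstacle is the argument for invariant~\eqref{eq: inv rep 1} under concatenation — that linear combinations of $\phi(R_i)$ turn cleanly into linear combinations of $\phi(R_iv)$ — which crucially relies on bilinearity of the wedge product; once that is in hand, everything else is bookkeeping.
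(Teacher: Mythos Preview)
Your proposal is correct and follows essentially the same approach as the paper: induction on $p$ for invariant~\eqref{eq: inv rep 1} via bilinearity of the wedge product, the dimension bound for~\eqref{eq: inv size 1}, and the observation that non-paths have $\phi$-image~$0$ and therefore fail the admission test in~R3. The only cosmetic difference is that the paper interleaves the ``only paths are ever added'' argument with the induction step, whereas you separate it out and apply it directly at level~$k$; both are valid.
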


\begin{proof}
  We need to convince ourselves that the invariants \eqref{eq: inv rep 1} and \eqref{eq: inv size 1} hold, and that the constructed sets $\mathscr R_v^p$ only contain paths from~$\mathscr P_v^p$.
  For the size invariant \eqref{eq: inv size 1}, it suffices to observe that $\langle \mathscr R_v^{p+1}\rangle$ is a subspace of $\Lambda(F^k)$ and thus has dimension at most $2^k$.
  Each element $Q$ was added in Step~R3 only if it increased the dimension of $\langle \mathscr R_v^{p+1}\rangle$, which can happen at most~$2^k$ times.

  We prove~\eqref{eq: inv rep 1} by induction on~$p$.
  For $p=1$, we have $\mathscr P_v^1=\mathscr R_v^1$ for all $v\in V(G)$ by Step~R1.
  For the induction step, assume that $p$ satisfies~$\phi(\mathscr P_u^p)\subseteq \langle\mathscr R_u^p\rangle$ for all $u\in V(G)$.
  Let $v\in V(G)$ and consider a path~$Puv$ from $\mathscr P_v^{p+1}$.
  To establish the inductive claim, it remains to show that $\phi(Puv)\in\langle\mathscr R_v^p\rangle$ holds.
  Note that $Pu$ belongs to $\mathscr P_u^p$, so the induction hypothesis implies
  \( \phi(Pu)\in\langle \mathscr R_u^p\rangle\,.\)
  Thus, there are coefficients $a_1,\ldots,a_d\in F$ and paths $R_1,\ldots, R_d\in\mathscr R_u^p$ such that
  \begin{equation}\label{eq: phi linear comb}
    \phi(Puv)= \phi(Pu)\wedge \phi(v) =  
    \left(\sum_{j=1}^d a_i\phi(R_i)\right) \wedge \phi(v) =
    \sum_{j=1}^d a_i\phi(R_i) \wedge \phi(v) =
    \sum_{j=1}^d a_i \phi(R_iv)\,.
  \end{equation}
  From $R_i\in \mathscr R_u^p$ and $uv\in E(G)$ we obtain $R_iv\in \mathscr Q_v^{p+1}$ by construction~\eqref{eq: Q def}.
  If~$R_iv$ is not a path, then $\phi(R_i v)=0\in\langle \emptyset \rangle$ holds, which implies that $R_iv$ is not added to $\mathscr R_v^{p+1}$ in Step~R3.
  Thus Step~R3 only ever adds paths, which implies~$\mathscr R_v^{p+1}\subseteq\mathscr P_v^{p+1}$ as required.
  Even if $R_iv$ is path, we may not have~$R_i v\in\mathscr R_v^{p+1}$.
  Nevertheless Step~R3 ensures that $\phi(R_iv)\in \langle \mathscr R_v^{p+1}\rangle$ holds at the end of the construction.
  Together with expression~\eqref{eq: phi linear comb}, this shows that $\phi(Puv)$ belongs to $\langle \mathscr R_v^{p+1}\rangle$, so that the representation invariant \eqref{eq: inv rep 1} holds.
  
  It remains to prove the correctness of the algorithm.
  If the algorithm outputs true, then $\emptyset\ne\mathscr R_v^k\subseteq\mathscr P_v^k$ holds, and so there exists a $k$-path.
  On the other hand, if there exists some $k$-path, say $P\in \mathscr P_v^k$, then
  $\phi(P)\ne 0$ follows from Lemma~\ref{lem: main} and the fact that the extensors~$\phi(v_i)$ are in general linear position.
  We have $\phi(P)\in \langle \mathscr R_v^k\rangle$ by~\eqref{eq: inv rep 1}, which implies that $\dim\langle\mathscr R_v^k\rangle\neq 0$ and $\mathscr R_v^k\neq\emptyset$ holds.
  Thus the algorithm correctly outputs 'true'.

  For the running time, computation of $\mathscr R_v^p$ requires linear algebra on $2^k\times 2^kn$ matrices over $F$.
  This can be done in time $\exp(O(k))\poly(n)$.
\end{proof}

A more careful analysis of the linear and exterior algebra operations yields  an upper bound of $2^{\omega k}\poly 
(n)\leq 5.19^k \poly(n)$ on the running time of algorithm R.

\subsubsection*{Acknowledgments}
  We thank Markus Bl\"aser, Radu Curticapean, Balagopal Komarath, Ioannis Koutis, Pascal Schweitzer, Karteek Sreenivasaiah and Meirav Zehavi for some valuable discussions and insights.
  Part of this work was done while the authors attended Dagstuhl seminar~17341, \emph{Computational Counting}.
  Thore Husfeldt is supported by the Swedish Research Council grant VR-2016-03855 and the Villum Foundation grant 16582.

\bibliographystyle{plainurl}
\bibliography{extensor-coding}

\appendix

\section{Random Determinants}
\label{sec: random matrix}
Expressions for the higher moments of determinants of random matrices are available in the literature since the 1950s, see \cite{Nyquist54}.
Such results are considered routine, and follow from exercise 5.64 in Stanley \cite{Stanley} or the general method laid out on page 45--46 in Girko's book \cite{Girko}, but we have found no presentation that is quite complete.
For a judicious choice of distribution, the arguments become quite manageable, so we include a complete derivation.

\medskip
Let $B$ denote a random $k\times k$ matrix constructed by choosing every entry independently and at random from the set $\{\pm \surd 3,0\}$ with the following probabilities:
\[\Pr(b_{ij} = -\surd 3) = \Pr(b_{ij}= \surd 3) = \textstyle\frac{1}{6},\qquad 
\Pr(b_{ij}= 0) = \frac{2}{3}\,.\]

It is clear that every matrix entry satisfies $\E b_{ij} = 0$ and $\E b_{ij}^2=\frac{1}{3} 3 = 1$ and $\E b_{ij}^4= \frac{1}{3} 9 = 3$.

We will investigate the second and fourth moments of $\det B$.
By multiplicativity of the determinant, we can write $(\det B)^r = \det B^r$.

To see
\begin{equation}
  \label{eq: 2nd moment concrete}
  \E \det B^2= k!
\end{equation}
expand $\det B$ by the first row. 
If we write $B_{ij}$ for $B$ with the $i$th row and $j$th column deleted, we have
\[
\E \det B^2 = \E \sum_{i,j}(-1)^{i+j} b_{1i}b_{1j} \det B_{1i}\det B_{1j}\,.\]
The sum extends over all choices of $i,j\in\{1,\ldots,k\}$, but the only nonzero contributions are from $i=j$.
This is because for $i\neq j$, the factor $b_{1j} \det B_{1i}\det B_{1j}$ depends only on variables that are independent of $b_{1i}$, and the latter vanishes in expectation.
Thus,
\[ \E\det B^2= \sum_i (-1)^{2i} \E b_{1i}^2 \E\det B_{1i}^2 = k\E\det B_{11}^2\,,\]
because the distributions of $\det B_{1i}$  for $i\in\{1,\ldots,k\}$ are the same.
This can be viewed as a recurrence relation for $\E\det B^2$ as a function of the dimension $k$, which solves to \eqref{eq: 2nd moment concrete}.

\medskip

To show \[ \E \det B^4 = \tfrac{1}{2} (k!)(k+1)(k+2),\,\]
we use the same kind of arguments.
Write $f_k$ for $\E\det B^4$.
We have \( f_1 = \E b_{11}^4 = 3\) and can compute \( f_2 = 12 \). 
For larger $k$, we expand the first row of $B$ to obtain
\begin{equation*}
  f_k=
  \E \det B^4 = 
  \E \sum_{i,j,l,m}(-1)^{i+j+l+m} b_{1i}b_{1j}b_{1l}b_{1m} \det (B_{1i} B_{1j} B_{1l} B_{1m})\,.
\end{equation*}
As before, if any of $\mset{i,j,l,m}$ differs from the others, the corresponding term vanishes. 
The surviving contributions are of two kinds.
Either $i=j=l=m$, in which case the contribution is 
\begin{equation}\label{eq: contribution 1}
  \sum_i (-1)^{4i} \E b_{1i}^4 \E\det B_{1i}^4 = 3k\E\det B_{11}^4 = 3kf_{k-1}\,.
\end{equation}
Otherwise there are 3 ways in which the multiset $\mset{i,j,l,m}$ consists of two different pairs of equal indices.
The total contribution from these cases is
\begin{equation}\label{eq: contribution 2}
3\sum_{i\neq j} (-1)^{2i + 2j} \E b_{1i}^2 \E b_{1j}^2 \E\det ( B_{1i}^2B_{1j}^2 )
  = 3k(k-1)\E \det ( B_{11}^2B_{12}^2 )\,.
\end{equation}
We continue by expanding $B_{11}$ and $B_{12}$ along their first column.
This is the second and first column, respectively, of the original~$B$.
To keep the index gymnastics manageable, we briefly need the notation $B_{I,J}$ for $B$ without the rows in $I$ and the columns in~$J$.

The nonzero contributions are
\begin{equation*}
  \E \det (B_{11}^2 B_{12}^2) = 
  \E \sum_{i=2}^k\sum_{j=2}^k b_{i2}^2 b_{j1}^2 \det B_{\{1,i\},\{1,2\}}^2 \det B_{\{1,j\},\{2,1\}}^2\,.
\end{equation*}
We note that both $b_{i2}^2$ and $b_{j1}^2$ appear independently, because the remaining submatrices avoid the first and second columns of $B$. 
Since their expectations are unity, they can be removed from the expression.
For $i=j$, both matrices are the same, and the expression collapses to $(k-1)f_{k-2}$.
For $i\neq j$, we introduce the shorthand
\[ \Phi = \E\det (B_{\{1,i\},\{1,2\}}^2B_{\{1,j\},\{2,1\}}^2)\, \qquad(i\neq j)\,,	\]
observing that all these distributions are the same.
We arrive at
\begin{equation}\label{eq: E of smaller determinants}
  \E \det (B_{11}^2 B_{12}^2) = (k-1)f_{k-2} + (k-1)(k-2)\Phi\,.
\end{equation}
Combining \eqref{eq: contribution 1}, \eqref{eq: contribution 2}, and \eqref{eq: E of smaller determinants}, we obtain 
\begin{equation}\label{eq: f_k}
  f_k = 3kf_{k-1} + 3k(k-1)^2 \bigl( f_{k-2} + (k-2)\Phi\bigr)\,.
\end{equation}
Using similar arguments from a different starting point, we obtain
\begin{equation}\label{eq: f_k-1}
f_{k-1} = \E \det B_{11}^4 = 3(k-1)f_{k-2} + 3(k-1)(k-2)\Phi\,,
\end{equation}
by expanding $B_{11}$ along the second column;
the manipulations rely on the fact that in the definition of $\Phi$, the order in which columns $1$ and $2$ are deleted plays (of course) no role.
Combining \eqref{eq: f_k} and \eqref{eq: f_k-1} yields
\begin{equation*}
  f_k = k(k+2)f_{k-1}\,,
\end{equation*}
which solves to $f_k = \frac{1}{2}(k!)^2(k+1)(k+2)$.

\begin{rem}
  We can use this distribution in Section~\ref{sec: Bernoulli coding} in place of the uniform distribution on $\{+1,-1\}$.
The only thing we have to keep in mind is that we still have to be able to perform arithmetic operations in the field.
While this is clear for $\pm 1$, our use of irrational numbers here might create some confusion.
However, note that we only have to calculate with values coming from the field extension $\QQ[\sqrt 3]$, which can be handled just like complex numbers in spirit (after all, $\CC = \RR[\sqrt{-1}]$)
by representing a number $a + b \sqrt 3$ by the two rational coordinates $a,b$,
and performing multiplication according to $(a+b\sqrt 3) (c + d\sqrt 3) = ac + 3bd + (ad+bc)\sqrt 3$.
\end{rem}

\section{Generalization to Subgraphs}

In this section, we formally prove Theorem~\ref{thm approx count sub}.
We use the homomorphism polynomial as a tool for the computation, and we will evaluate this polynomial over a commutative algebra~$\algebra$ analogous to how this was done in Section~\ref{sec: Bernoulli coding}.
For two graph~$H$ and~$G$, let $\Hom HG$ be the set of all functions~$h:V(H)\to V(G)$ that are graph homomorphisms from~$H$ to~$G$.
Then the following is the homomorphism polynomial of~$H$ in~$G$:
  \begin{equation}\label{eq: hom polynomial}
    \sum_{h\in\Hom HG} \prod_{v\in V(H)} \zeta_{h(v)}
    \,.
  \end{equation}
  The variables are $\zeta_v$ for all $v\in V(G)$.
  We first show in \S\ref{app:hompoly} that this polynomial has small algebraic
  circuits when the pathwidth or the treewidth is bounded, and in \S\ref{app:proof of approx count sub} we prove
  Theorem~\ref{thm approx count sub}.

\subsection{Tree Decompositions}
\label{app:hompoly}

Fomin \emph{et al.}~\cite[Lemma~1]{DBLP:journals/jcss/FominLRSR12} construct an algebraic circuit that computes the homomorphism polynomial, based on a dynamic programming algorithm (\emph{e.g.},~\cite{DBLP:journals/tcs/DiazST02}).
We reproduce a proof for completeness.
\begin{lem}\label{lem: hom polynomial}
  Let~$H$ and~$G$ be graphs with $V(H)=\set{1,\dots,k}$ and $V(G)=\set{1,\dots,n}$.
  There is an algebraic circuit~$C$ of size $O\paren[\big]{k\cdot n^{\tw H+1}}$ (and an algebraic skew-circuit~$C$ of size $O\paren[\big]{k\cdot n^{\pw H+1}}$) in the variables~$\zeta_1,\dots,\zeta_n$ such that~$C$ computes the homomorphism polynomial of~$H$ in~$G$ in the variables~$\zeta_1,\dots,\zeta_n$, that is, we have
  \begin{equation}\label{eq: circuit hom polynomial}
    C(\zeta_1,\dots,\zeta_n)
    =
    \sum_{h\in\Hom HG} \prod_{v\in V(H)} \zeta_{h(v)}
    \,.
  \end{equation}
  The circuit can be constructed in time~$O(1.76^k) + |C| \cdot \operatorname{polylog}(|C|)$.
\end{lem}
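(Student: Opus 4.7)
My plan is a standard dynamic-programming construction on a (nice) tree or path decomposition of $H$, interpreted as an algebraic circuit rather than a table of values.

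\medskip
\textbf{Step 1: Obtain the decomposition.} First I would compute an optimal nice tree decomposition $(T,\{B_t\}_{t\in T})$ of $H$ of width $\tw H$ (for the skew-circuit claim, a nice path decomposition of width $\pw H$), with the standard leaf / introduce-vertex / forget-vertex / join node types. The $O(1.76^k)$ term in the construction time comes from a known exact algorithm for treewidth/pathwidth of graphs on $k$ vertices (e.g.\ Fomin--Villanger style), which I would simply cite. The number of nodes in a nice decomposition can be made $O(k)$.

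\medskip
\textbf{Step 2: Define the gates.} For each node $t$ and each function $\phi\colon B_t\to V(G)$ I introduce a gate $g_{t,\phi}$, intended to compute
\[
g_{t,\phi} \;=\; \sum_{h}\prod_{v\in V_t\setminus B_t}\zeta_{h(v)}\,,
\]
where $V_t$ denotes the union of all bags in the subtree rooted at $t$, and the sum ranges over homomorphisms $h\colon H[V_t]\to G$ extending $\phi$. Since $|B_t|\le\tw H+1$, there are at most $n^{\tw H+1}$ such gates per node, giving $O(k\cdot n^{\tw H+1})$ gates in total.

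\medskip
\textbf{Step 3: Recurrences.} The transitions are:
\begin{itemize}[noitemsep,topsep=2pt]
\item \emph{Leaf}: $g_{t,\phi}=1$ if $\phi$ respects all edges inside $B_t$, else $0$.
\item \emph{Introduce $v$}: $g_{t,\phi}=g_{t',\phi|_{B_{t'}}}$ if the new edges from $v$ into $B_t$ are respected by $\phi$, else $0$.
\item \emph{Forget $v$}: $g_{t,\phi}=\sum_{u\in V(G)}\zeta_u\cdot g_{t',\phi\cup\{v\mapsto u\}}$, where the sum runs over $u$ that keep $\phi\cup\{v\mapsto u\}$ edge-consistent on $B_{t'}$.
\item \emph{Join}: $g_{t,\phi}=g_{t_1,\phi}\cdot g_{t_2,\phi}$; there is no double counting because $B_t$ itself contributes no $\zeta$-factors.
\end{itemize}
By induction on the subtree, the gate at the (empty-bag) root computes $\sum_{h\in\Hom HG}\prod_{v\in V(H)}\zeta_{h(v)}$, which is exactly \eqref{eq: circuit hom polynomial}.

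\medskip
\textbf{Step 4: Circuit size and skew case.} Each recurrence adds $O(n)$ new arithmetic gates per stored $\phi$ (the forget-step being the bottleneck), so the total circuit size is $O(k\cdot n^{\tw H+1})$. For a path decomposition there are no join nodes, and the remaining operations (leaf, introduce, forget) only multiply a subcircuit by a single variable $\zeta_u$ or add such terms, so the resulting circuit is skew, yielding the $O(k\cdot n^{\pw H+1})$ bound. Once the decomposition is in hand, each gate is laid down in polylogarithmic overhead, giving the claimed $O(1.76^k)+|C|\cdot\operatorname{polylog}(|C|)$ construction time. The main subtlety — and the only place where one must be careful — is arranging the join step so that bag vertices are not double-counted; choosing the convention that $g_{t,\phi}$ excludes the $\zeta$-factors of $B_t$ itself makes the join a plain product and removes any need for divisions.
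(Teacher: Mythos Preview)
Your proposal is correct and follows essentially the same dynamic programming on a nice tree/path decomposition as the paper's proof, including the same Fomin--Villanger citation for the $O(1.76^k)$ preprocessing and the same argument that the absence of join nodes makes the path-decomposition circuit skew. The only cosmetic difference is a convention: you attach the factor $\zeta_{h(v)}$ when $v$ is \emph{forgotten} (so the partial polynomial omits bag vertices), whereas the paper attaches it when $v$ is \emph{introduced}; your choice makes the join rule a clean product without the double-counting of bag variables that the paper's stated join recurrence would otherwise incur, so if anything your variant is the tidier of the two.
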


We first need some preliminaries on tree decompositions.
A \emph{tree decomposition} of a graph~$G$ is a pair $(T,\bag)$, where $T$ is a tree and $\bag$ is a mapping from~$V(T)$ to $2^{V(G)}$ such that, for all vertices $v\in V(G)$, the
set $\setc{t\in V(T)}{v\in\bag(t)}$ is nonempty and connected in~$T$, and for all edges $e\in E(G)$, there is some node $t\in
V(T)$ such that $e\subseteq\bag(t)$.
The set $\bag(t)$ is the \emph{bag at~$t$}.
The \emph{width} of $(T,\bag)$ is the integer $\max\setc{|\bag(t)|-1}{t\in
V(T)}$, and the {\em treewidth} $\tw{G}$ of $G$ is the minimum possible
width of any tree decomposition of~$G$.

It will be convenient for us to view the tree $T$ as being
directed away from the root, and we define the following mappings
$\sep,\cone,\comp:V(T)\to2^{V(G)}$ for all $t\in V(T)$:
\begin{align}
  \label{eq:bagsep}
\text{(\emph{separator at $t$}) \quad }
  \sep(t)&=
  \begin{cases}
    \emptyset\,,&\text{$t$ is the root of $T$},\\
    \bag(t)\cap\bag(s)\,,&\text{$s$ is the parent of $t$ in $T$},
  \end{cases}\\
  \text{(\emph{cone at $t$}) \quad }
  \label{eq:bagcone}
  \cone(t)&=\bigcup_{\text{$u$ is a descendant of $t$}}\bag(u),\\
  \text{(\emph{component at $t$}) \quad }
  \label{eq:bagcomp}
  \comp(t)&=\cone(t)\setminus\sep(t).
\end{align}

\begin{proof}[Proof of Lemma~\ref{lem: hom polynomial}]
  We first compute a minimum-width tree decomposition~$(T,\bag)$ of~$H$, for example using the $O(1.76^k)$ time algorithm by Fomin and Villanger~\cite{DBLP:journals/combinatorica/FominV12}.
    We can assume it to be a \emph{nice} tree 
  decomposition, in which each node has at most two children; the leaves satisfy 
  $\bag(v)=\emptyset$, the nodes with two children $w,w'$ satisfy
  $\bag(v)=\bag(w)=\bag(w')$ and are
  called \emph{join} nodes, the nodes with one child~$w$ satisfy 
  $\bag(v)=\bag(w)\cup\{x\}$ and are called \emph{introduce} nodes, or
  $\bag(v)\cup\{x\}=\bag(w)$ and are
  called \emph{forget} nodes.

  Recall that $\cone(v)$ is the union of all bags at 
  or below node~$v$ in the tree~$T$.
  Let $S\subseteq V(H)$ and $\pi\in\Hom{H[S]}{G}$ be a partial homomorphism from~$H$ to~$G$.
  To make the inductive definition of the algebraic circuit easier, we define the \emph{conditional homomorphism polynomial} as follows.
  \begin{equation}
    \hom\parenc{H\to G}{\pi}
    =
    \sum_{\substack{h\in\Hom HG\\h\supseteq\pi}} \prod_{v\in V(H)} \zeta_{h(v)}\,.
  \end{equation}
  The sum is over all homomorphisms~$h$ that extend~$\pi$.
  With this definition, it is clear that
  \begin{equation}\label{eq:homomorphism conditioning}
    \hom\paren{H\to G}
    =
    \sum_{\pi\in\Hom{H[S]}{G}}
    \hom\parenc{H\to G}{\pi}
  \end{equation}
  holds.
  Moreover, if~$S$ is a separator of~$H$, the connected components of $H-S$ conditioned on the boundary constraints~$\pi$ are independent.
  More precisely, for all~$\pi\in\Hom{H[S]}{G}$, we have
  \begin{equation}\label{eq:homomorphism independence}
    \hom\parenc{H\to G}{\pi}
    =
    \prod_{i} \hom\parenc{H_i\to G}{\pi}\,,
  \end{equation}
  where $H_1,\dots,H_\ell$ is a list of graphs such that $H_1\cup\dots\cup H_\ell=H$ holds, $V(H_i)\cap V(H_j)=S$ holds for all $i,j$ with $i\ne j$, and $H_i-S$ is connected for all~$i$.

  We will construct the final circuit recursively over the tree decomposition~$(T,\beta)$.
  At node~$v$ of~$T$, we construct algebraic circuits~$C_v^\pi$ for each~$\pi\in\Hom{\bag(v)}{G}$ such that the following holds:
  \begin{equation}\label{eq:circuit inductive assumption}
    C_v^\pi
    =
    \hom\parenc{H[\cone(v)]\to G}{\pi}\,.
  \end{equation}
  Note already here that there are at most $n^{\abs{\bag(v)}}\le n^{\tw H+1}$ such functions~$\pi$.
  Since each $C_v^\pi$ represents a gate in our final circuit, and we will be able to charge at most a constant number of wires to each gate, the number of gates and wires of~$C$ will be bounded by $O(\abs{V(H)}\cdot n^{\tw H + 1})$.

  \paragraph{Leaf nodes.}
  Let $v$ be a leaf of~$T$, which has ${\cone(v)=\emptyset}$, resulting in the trivial circuit~$C_v^\pi = 1$ for the empty function~${\pi: \emptyset\to V(G)}$.
  Indeed, since $H[\cone(v)]$ has no vertices, the empty function is the unique homomorphism into~$G$.

  \paragraph{Introduce nodes.}
  Let $v$ be an \emph{introduce} node of~$T$.
  Let $w$ be its unique child in the tree.
  Suppose the vertex~$x\in V(H)$ is introduced at this node, that is, 
  $\bag(w)\cup\{x\}=\bag(v)$.
  Let $\pi\in\Hom{H[\bag(v)]}{G}$ be a partial homomorphism at the bag of~$v$.
  Then we define $C_v^\pi$ using the circuit~$C_w^{\pi'}$ where $\pi'=\pi\restriction_{\bag(w)}$:
  \begin{equation}\label{eq:circuit at introduce nodes}
    C_v^\pi = C_w^{\pi'}\cdot \zeta_{\pi(x)}\,.
  \end{equation}
  For the correctness, note that the right side of~\eqref{eq:circuit at forget nodes} is equal to 
  \begin{equation}\label{eq:introduce correctness}
    \hom\parenc{{H[\cone(w)]}\to{G}}{\pi'} \cdot \zeta_{\pi(x)}
  \end{equation}
  by the induction hypothesis~\eqref{eq:circuit inductive assumption}.
  Since $x$ is the unique vertex in $\cone(v)\setminus\cone(w)$ and $\pi$ extends $\pi'$ on~$x$, the polynomial in~\eqref{eq:introduce correctness} is equal to
  $\hom\parenc{{H[\cone(v)]}\to{G}}{\pi}$.

  \paragraph{Forget nodes.}
  Let $v$ be a \emph{forget} node of~$T$.
  Let $w$ be its unique child in the tree.
  Suppose the vertex~$x\in V(H)$ is forgotten at this node, that is, 
  $\bag(w)\setminus\{x\}=\bag(v)$.
  Then the neighborhood of $x$ is contained in $\cone(w)$.
  Let $\pi\in\Hom{H[\bag(v)]}{G}$.
  We define $C_v^\pi$ using the circuits~$C_w^{\pi'}$ as follows:
  \begin{equation}\label{eq:circuit at forget nodes}
    C_v^\pi = \sum_{\pi'} C_w^{\pi'}\,.
  \end{equation}
  The sum is over all~$\pi'\in\Hom{H[\bag(w)]}{H}$ that agree with~$\pi$ on the intersection~$\bag(v)\cap\bag(w)$ of their domains.
  Since~$v$ is a forget node, this intersection is equal to $\bag(v)$.
  For the correctness, note that the right side of~\eqref{eq:circuit at introduce nodes} is equal to
  \begin{equation}
    \sum_{\pi'} \hom\parenc{{H[\cone(w)]}\to{G}}{\pi'}
  \end{equation}
  by the induction hypothesis~\eqref{eq:circuit inductive assumption}.
  This is equal to $\hom\parenc{{H[\cone(v)]}\to{G}}{\pi}$ due to the conditioning formula~\eqref{eq:homomorphism conditioning}.
  For the size bound, note that~$x$ is the only vertex in $\bag(w)\setminus\bag(v)$.
  Thus, the sum in~\eqref{eq:circuit at forget nodes} has~$n$ terms, and so in this part of the construction we charge at most one wire to each $C_w^{\pi'}$.

  \paragraph{Join nodes.}
  Let $v$ be a \emph{join} node of~$T$ with 
  children~$w$ and~$w'$ satisfying $\bag(v)=\bag(w)=\bag(w')$.
  Let ${\pi\in\Hom{H[\bag(v)]}G}$.
  We define the circuit simply as
  \begin{equation}
    C_v^\pi = C_w^\pi \cdot C_{w'}^\pi\,.
  \end{equation}
  For the correctness, note that $\bag(v)$ is a separator for $H[\cone(v)]$, and so the induction hypothesis~\eqref{eq:circuit inductive assumption} together with the conditional independence~\eqref{eq:homomorphism independence} yields the correctness.
  This part of the construction introduces two wires which we charge to $C_v^\pi$.

  The final circuit is $C=C_r^\emptyset$, where $r$ is the root of~$T$, the degenerate empty function is~$\emptyset$, and we assume without loss of generality that $\bag(r)=\emptyset$.
  As already discussed, we have $O(\abs{V(T)} n^{\tw H + 1}) = O(\abs{V(H)} n^{\tw H+1})$ gates $C_v^\pi$, each of which is responsible for $O(1)$ wires incident to it.

  Finally, note that if there are no join nodes, then $(T,\beta)$ is a path decomposition of~$H$ and the only multiplications occur in~\eqref{eq:circuit at introduce nodes} and involved at least one variable.
  Thus, when $(T,\beta)$ is a minimum-width path-decomposition, the algebraic circuit~$C$ constructed above is skew, and has size~$O(k n^{\pw H+1})$.
\end{proof}

\subsection{Proof of Theorem~\ref{thm approx count sub} and~\ref{thm decide sub}}
\label{app:proof of approx count sub}
\begin{repthm}{thm approx count sub}
  \statethmapproxcount
\end{repthm}
\begin{proof}[Proof sketch]
  Let $H$, $G$, and $\varepsilon>0$ be given as input.
  Let $n$ be the number of vertices of~$G$.
  By Lemma~\ref{lem: hom polynomial}, we can construct an algebraic skew circuit~$C$ that computes the homomorphism polynomial of~$H$ in~$G$.
  The circuit has size~$O(k n^{\pw H+1})$ and satisfies:
  \begin{equation}
    C(\zeta_1,\dots,\zeta_n)
    =
    \sum_{h\in\Hom HG} \prod_{v\in V(H)} \zeta_{h(v)}
    \,.
  \end{equation}
  
  Following the setup of \S\ref{sec: Bernoulli coding}, we use the lifted Bernoulli extensor-coding $\overline\beta\colon V(G)\to F^{2k}$.
  We have
  \begin{equation}\label{eq: xi in hompoly}
    C(\overline\beta(v_1),\dots,\overline\beta(v_n))
    =
    \pm \sum_{h\in\InjHom HG} \prod_{v\in V(H)} \overline\beta({h(v)})
    \,,
  \end{equation}
  where $\InjHom HG$ is the subset of~$\Hom HG$ that consists of all homomorphisms that are injective.
  Now we use Algorithm~C, except that we replace $f(G;\overline\beta)$ with
  $C(\overline\beta(v_1),\dots,\overline\beta(v_n))$.
  The rest goes through as in Theorem~\ref{thm: algorithm C}.
  Note that this approach using~\eqref{eq: xi in hompoly} actually approximates the number of injective homomorphisms, which however gives rise to an approximation (of the same quality) for $\Sub(H,G)$ when we divide by the size~$\abs{\operatorname{Aut}(H)}$ of the automorphism group of~$H$.
  The size of the automorphism group of~$H$ can be computed in advance, in time $O(1.01^k)$, by a well-known $\poly(k)$-time reduction to the graph isomorphism problem~\cite{DBLP:journals/ipl/Mathon79}, which in turn can be computed in time $\exp(\poly\log k)\le O(1.001^k)$~\cite{DBLP:conf/stoc/Babai16}.
  For the running time of the modified Algorithm~C, note that~$C$ is a skew circuit, and skew multiplication in~$\Lambda(F^{2k})$ takes time $O(4^k)$.
  Thus, the overall running time is~$O(\varepsilon^{-2}4^k \abs{C})$.
\end{proof}

Theorem \ref{thm decide sub} in the case of paths is established through Theorem \ref{thm: algorithm F}.
For the more complicated case of general subgraphs, we can modify the polynomial \eqref{eq: hom polynomial} analogously so that it involves also the edge variables. By suitable modifications of the dynamic program that computes the corresponding circuit, Theorem \ref{thm decide sub} is established.

\section{Proof of Theorem 3} \label{sec: mult det}
In this section, we prove Theorem~\ref{thm detect multilinear}.
This will follow by an application of our algebraic interpretation of color-coding from Section~\ref{sec: color-coding}. 
In particular, we will again make use of the Zeon algebra $Z(F^k)$.

The next proposition is a trivial consequence of Lemma \ref{lem: zeons}.
\begin{prop} \label{prop: zeon eval}
  For any integer $t>0$, an arithmetic circuit $C$ over $\ZZ[\zeta_1,\ldots,\zeta_n]$ can be evaluated over $Z(\QQ^t)$ in $2^t \cdot |C|\cdot \poly(n)$ operations over $\QQ$.
\end{prop}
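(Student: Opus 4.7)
My plan is to observe that this is essentially a bookkeeping argument: simulate the circuit $C$ gate-by-gate, and charge each gate to the cost of a single addition or multiplication in $Z(\QQ^t)$ as given by Lemma~\ref{lem: zeons}.

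More concretely, I would process the gates of $C$ in topological order. Input gates of $C$ come in two flavors. Variable inputs $\zeta_i$ are assumed to already be supplied as elements of $Z(\QQ^t)$, which under the $2^t$-dimensional basis of Lemma~\ref{lem: zeons} costs nothing to store. Integer constants $c \in \ZZ$ appearing in $C$ are embedded into $Z(\QQ^t)$ as $c \cdot \overline{\cano\emptyset}$, which requires writing down a single rational coordinate. At each internal gate, we have the recursive invariant that both children have already been computed and represented in the $2^t$-dimensional basis; we then perform either one addition or one multiplication in $Z(\QQ^t)$. By Lemma~\ref{lem: zeons}, each such operation uses $2^t \cdot \poly(n)$ operations over $\QQ$ (folding the $\poly(k)$ overhead for subset convolution and the $\poly(n)$ for bookkeeping into a single $\poly(n)$ factor, since we silently assume $k \le n$).

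Summing over all $|C|$ gates yields the claimed bound of $2^t \cdot |C| \cdot \poly(n)$ field operations. There is no real obstacle: the only thing worth noting is that the evaluation is entirely mechanical because $Z(\QQ^t)$ is an associative, commutative $\QQ$-algebra, so the output of $C$ as an element of $Z(\QQ^t)$ is well-defined and coincides with what the gate-by-gate simulation produces.
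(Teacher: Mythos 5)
Your proof is correct and follows exactly the reasoning the paper has in mind: the paper declares this proposition a ``trivial consequence of Lemma~\ref{lem: zeons},'' and the gate-by-gate simulation you spell out, charging each gate one addition or multiplication in $Z(\QQ^t)$ at cost $2^t\cdot\poly(n)$ via that lemma, is precisely the intended argument.
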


We are ready for the proof:

\begin{proof}[Proof of Theorem \ref{thm detect multilinear}]
  We invoke Proposition \ref{prop: zeon eval} with Hüffner \emph{et al.}'s \cite{DBLP:journals/algorithmica/HuffnerWZ08} choice of $t=1.3k$.
One evaluation costs  $2.4623^k \cdot |C|\cdot \poly(n)$ operations over $\QQ$.
The classical color-coding approach would evaluate $C$ at the generators $\overline{\cano i}$,
where $1 \leq i \leq t$ is chosen uniformly at random.
In this way, all non-multilinear terms will vanish, 
but distinct monomials might cancel when being mapped to the same product of $\overline{\cano i}$.
To avoid this, we randomly scale each generator,
and plug in $\alpha_i \cdot \overline{\cano j}$ at the $i$th input of the circuit, for random $\alpha_i \in \{0,1,\ldots,100\cdot k\}$ and random $1 \leq j \leq t$.
The circuit $C$ then evaluates to some multiple of $\overline{\cano t}$,
and the coefficient of $\overline{\cano t}$ in the result is a multilinear polynomial in the $\alpha_i, 1 \leq i \leq n$.
By the DeMillo--Lipton--Schwartz--Zippel-Lemma \cite{DBLP:journals/ipl/DemilloL78, DBLP:journals/jacm/Schwartz80, DBLP:conf/eurosam/Zippel79} the polynomial evaluates non-zero with constant probability of $99\%$.
Following Hüffner \emph{et al.} \cite[Theorem 1]{DBLP:journals/algorithmica/HuffnerWZ08},
the probability that some multilinear term maps to a multiple of 
the $t$ generators is at least $\Omega(1.752^{-k})$, 
and we derive the total running time of $4.32^k\cdot |C|\cdot \poly(n)$ operations in $\QQ$.
Now, if the circuit can be evaluated over $\ZZ$ in polynomial time
(\emph{i.e.}, all numbers stay of appropriate size), this costs only a polynomial overhead, and the claim follows.
However, by repeated squaring, the circuit $C$ may generate numbers of value $2^{2^n}$,
so we calculate modulo some random prime.
Numbers of bitlength~$O({2^n})$ may have up to~$O(2^n)$ prime factors,
so choosing a random prime $p$ from the first $\Omega(n 2^n)$ primes,
we find a value for $p$ such that the resulting coefficient doesn't vanish modulo $p$
with probability $1 - o(1)$.
By the prime number theorem, the first $n 2^n$ primes are of magnitude $2^n \poly(n)$,
and we can thus randomly pick a number from $\{1,\ldots,P\}$,
where $P = 2^n \poly(n)$, until we find a prime (which can be tested in randomized polynomial time).
Then we perform all the above calculations modulo $p$,
and if the result doesn't vanish, the polynomial doesn't vanish over $\ZZ$.
On the other hand, if it vanishes, we might have had bad luck,
but as argued, this only happens with probability $1/n$, which is fine.
\end{proof}
\end{document}